\theoremstyle{definition} 
\theoremstyle{definition} 
\newtheorem {theorem} {Theorem}[section]
\newtheorem {corollary} {Corollary}[section]
\newtheorem {lemma} {Lemma}[section]
\newcommand{\kb}[1]{\ket{#1}\bra{#1}}
\newcommand{\N}{\mathbb{N}}
\newcommand{\al}{\mathcal{A}}
\newcommand{\trd}[1]{\left|\left| #1 \right| \right|}
\newcommand{\samp}{\Psi}
\newcommand{\st}{\text{ } | \text{ }}
\newcommand{\Hmin}{H_{\text{min}}}
\newcommand{\Hextd}{\bar{H}}
\newcommand{\alphabet}{\mathcal{A}}
\newcommand{\hd}{\Delta_H}
\newcommand{\leakec}{\texttt{leak}_{\texttt{EC}}}
\newcommand{\experiment}[1]{\leftarrow\textbf{\texttt{Exp}}\left(#1\right)}
\title{Quantum Sampling for Optimistic Finite Key Rates in High Dimensional Quantum Cryptography}
\date{}
\author[1]{Keegan Yao}
\author[1]{Walter O. Krawec\footnote{Email: \texttt{walter.krawec@gmail.com}}}
\author[1]{Jiadong Zhu}
\affil[1]{\small{Department of Computer Science and Engineering}\\\small{University of Connecticut}\\\small{Storrs, CT 06269 USA}}
\begin{document}
\maketitle
\begin{abstract}
It has been shown recently that the framework of quantum sampling, as introduced by Bouman and Fehr, can lead to new entropic uncertainty relations highly applicable to finite-key cryptographic analyses. Here we revisit these so-called sampling-based entropic uncertainty relations, deriving newer, more powerful, relations and applying them to source-independent quantum random number generators and high-dimensional quantum key distribution protocols.  Along the way, we prove several interesting results in the asymptotic case for our entropic uncertainty relations. These sampling-based approaches to entropic uncertainty, and their application to quantum cryptography, hold great potential for deriving proofs of security for quantum cryptographic systems, and the approaches we use here may be applicable to an even wider range of scenarios.
\end{abstract}

%\maketitle
%\tableofcontents
\section{Introduction}

Quantum sampling, as introduced by Bouman and Fehr in \cite{bouman2010sampling}, is a framework allowing for the analysis of quantum systems through classical statistical sampling methods. Informally, it was shown that when sampling a quantum state (via measuring some subset of it in a particular basis), the remaining, unmeasured, portion of the state behaves like a superposition of states that are ``close'' (with respect to some target value such as Hamming weight) to the observed sample. How close they are depends, in fact, on the error probability of the classical sampling protocol used (where the classical sampling strategy would observe a portion of a classical word in some alphabet and argue about how the remaining, unobserved, portion of the word looks). At a high level, suppose one measures a random portion of some quantum state $\ket{\psi}$ in the $Z = \{\ket{0}, \cdots, \ket{d-1}\}$ basis and always observes $\ket{0}$. Then, one would expect that the remainder of the state (the unmeasured portion) should be a superposition of states that are relatively close to the all $\ket{0\cdots 0}$ state. Bouman and Fehr's framework formalizes this notion, even when the state is entangled with an environment system (e.g., an adversary).

Besides being fascinating on its own, there are now several interesting applications of this work. In their original paper \cite{bouman2010sampling}, the authors showed some applications to quantum cryptography, namely a security proof of the entanglement-based BB84 QKD protocol for qubits (dimension two systems). Recently in \cite{krawec2019quantum,krawec2020new}, we showed how the quantum sampling framework may be used to derive novel quantum entropic uncertainty relations which are highly applicable to finite-key quantum cryptographic security analyses. Informally, quantum entropic uncertainty relations bound the amount of uncertainty in two different measurement outcomes performed on some quantum system. For instance, the famous Maassen and Uffink relation \cite{maassen1988generalized} (which, itself, followed from a conjecture by Kraus in \cite{kraus-uncertainty} and was an improvement over an uncertainty relation proposed first by Deutsch \cite{deutsch-first-bound}) states that, given a quantum state $\rho$ acting on a $d$-dimensional Hilbert space $\mathcal{H}_d$, then if two measurements are performed on the system resulting in random variables $M$ and $N$ respectively, it holds that $H(M) + H(N) \ge \gamma$, where $\gamma$ is a function of the two measurements performed (namely their overlap, though we will formally define this later for our applications). In particular, one cannot in general be certain of the outcome of both measurements of the system. By now there are numerous quantum entropic uncertainty relations with various fascinating properties and applications; for a general survey, the reader is referred to \cite{survey-2,survey,survey-3}.

The so-called \emph{sampling-based entropic uncertainty relations} we introduced in our earlier work \cite{krawec2019quantum,krawec2020new} turn out to be highly useful in finding optimistic secure bit generation rates for quantum random number generation (QRNG) protocols in the source-independent security model \cite{vallone2014quantum}. Our relations bounded the quantum min-entropy $\Hmin(A|E)$ as a function of the Shannon entropy of a particular measurement outcome and the measurement overlap. Since min entropy is a highly valuable resource in quantum cryptography (in particular, it can be used to determine how many uniform random bits one may extract from a source, independent of any adversary \cite{renner2008security}), finding tight bounds on this quantity is highly desirable when analyzing quantum cryptographic protocols. As we've shown in our earlier work, our relations often out-perform prior work in cryptographic settings, producing more optimistic bit generation rates for QRNG protocols leading, potentially, to more rapid implementations of such systems (though here, and in our prior work, we focus only on theoretical analyses - practical settings, though interesting, are outside the scope of this current work). Furthermore, our sampling-based relations incorporate all needed finite sampling effects thus making them easy to use ``out of the box.''

Here, we revisit sampling-based entropic uncertainty relations. These relations involve a quantum state $\rho$, possibly entangled with an adversary, whereby a random sample is chosen and a test is performed by measuring a portion of $\rho$ resulting in some outcome $q$. In this work, we show a highly general, two-party entropic uncertainty relation (Theorem \ref{thm:twoparty}) which, informally, states that with high probability (based on the failure probability of a classical sampling strategy):
\begin{equation}
    \Hmin^\epsilon(A|E) + \log_2 |J_q| \ge n\gamma,
\end{equation}
where $J_q$ is the set of all words in some alphabet that are ``close'' to the observed string $q$; $n$ is the number of qudits that were not measured in the test state; and $\gamma$ is a function of the overlap between the two measurements. One of the strong advantages to our new sampling-based relation is that one may design classical sampling strategies suitable to a quantum cryptographic purpose and simply insert it directly into the above; all one needs to do is analyze the classical error probability and bound or evaluate the size of the set $J_q$ (which is typically a combinatorial proof). Though this result is more general than our original, it turns out the proof of this is nearly identical to our prior work in \cite{krawec2019quantum,krawec2020new}. However the novelty is, first, in the generality of the result that it works for any classical sampling strategy (whereas in \cite{krawec2020new} only a particular sampling strategy was proven); second in its applications, we show that this new bound is powerful enough to analyze a particular source-independent (a form of partial device independence introduced first in \cite{vallone2014quantum}) QRNG protocol producing more optimistic bit-generation rates than prior work using alternative entropic uncertainty relations and, furthermore, unlike our previous work, can provide an alternative proof of the previously mentioned Maassen-Uffink relation for dimensions strictly greater than $2$ (in \cite{krawec2019quantum} we showed this for dimension $2$ systems only).

Our second main contribution is to show a novel three-party sampling-based entropic uncertainty relation involving Alice, Bob, and Eve. Here, Alice and Bob perform a test measurement on some portion of their shared quantum state, resulting in outcome $q_A$ and $q_B$ respectively (these are words in some $d$-character alphabet). Then, informally, our new entropic uncertainty relation (Theorem \ref{thm:three-party}) states that, with high probability:
\begin{equation}
    \Hmin^{\epsilon}(A|E) + \eta_dH_d\left[ \hd(q_A,q_B)+\delta\right] \ge n_0\gamma + n_1\hat{\gamma},
\end{equation}
where $n_0+n_1=n$, the number of systems not measured initially; $\eta_d$ is a constant depending on the dimension ($d$) of the individual systems measured; $\delta$ takes into account imperfect, finite samples; $H_d$ is the $d$-ary Shannon entropy; and $\hd(x,y)$ is the Hamming distance of words $x$ and $y$. Our entropic uncertainty relation can actually incorporate the maximal measurement overlap $\hat{\gamma}$ and the second-maximal overlap $\gamma$, making it useful if the two measurement bases have a similar basis element (e.g., a ``vacuum'' element, useful in QKD when considering channel loss).  This ability shows the great promise in using the Quantum Sampling framework of Bouman and Fehr, augmented with our proof techniques developed here and in our prior work \cite{krawec2019quantum,krawec2020new} to prove interesting, and useful, entropic uncertainty relations.  Indeed, our proof method can even be extended to support additional measurement overlap quantities.

Note that, if $q_A = q_B$, then our result shows that the min-entropy conditioned on the adversary's system $E$ must be high. We use our entropic uncertainty relation to provide a proof of security, in the finite key setting, of the High-Dimensional BB84 protocol \cite{HD-BB84-4,HD-BB84,HD-BB84-2,HD-BB84-3}. Our security proof is valid against arbitrary attacks by an adversary and applies easily to any dimension $d$ of the signal states and can even take into account lossy channels.  Since high-dimensional QKD protocols exhibit many fascinating and useful properties (such as increased noise tolerance \cite{HD-BB84,HD-qkd-survey}), and are experimentally feasible today \cite{HD-exp1,HD-exp2,HD-exp3,HD-exp4}, our new analysis may provide even further benefits to these systems. We note that in \cite{bouman2010sampling}, the sampling framework was used to provide a proof of security for the standard (qubit-based) BB84 using alternative methods which were specific to the qubit-BB84 protocol. Our method provides, first, a novel entropic uncertainty relation which may have numerous other applications to quantum cryptographic protocols outside of HD-BB84; and, secondly, provides as an application a simple proof of security for the high-dimensional variant of BB84 for any dimension $d$ of the system.

This work makes several contributions, not the least of which is showing yet further fascinating, and highly applicable, connections between the quantum sampling framework of Bouman and Fehr \cite{bouman2010sampling} and quantum information theory, in particular entropic uncertainty. Furthermore, our relations are immediately applicable to quantum cryptography in the finite key setting, leading to composable security \cite{renner2008security} and, as we show, in most typical scenarios also highly optimistic secure bit-generation rates for source-independent QRNG protocols and QKD protocols. In practice, such sampling-based approaches show that quantum communication systems may run at higher bit-generation rates than previously thought. Thus, not only does this work provide interesting theoretical contributions, but also potential practical ones (though, as stated, we are not considering practical experimental imperfections here, leaving this as interesting future work). We suspect that there are even more connections and applications of the quantum sampling framework which may shed further light on problems in general information theory and applied quantum cryptography. This paper attempts to take a step forward in that direction.

\subsection{Notation} \label{section:notation}

We start with some notation and definitions that we will use throughout this work. An \textit{alphabet} $\al_d$ is a set of $d$ characters which we typically label $\{0, 1, \cdots, d-1\}$. Given a word $q \in \al_d^n$, the \textit{substring $q_t$ indexed by $t \subset \{1,\dots,n\}$} is the string $q_t = q_{t_1}q_{t_2}\dots q_{t_{|t|}}$. The substring $q_{-t}$ denotes the substring indexed by the complement of $t$.

Much of our work involves arguing about the properties of a given word. In particular, given a string $q \in \al_d^n$, the \textit{relative Hamming weight} is defined as $w(q) = \frac{|\{j \st q_j \neq 0\}|}{n}$ and the \textit{relative character count with respect to $i \in \al_d$} is defined as $c_i(q) = \frac{|\{j \st q_j = i\}|}{n}$. Note that $w(q) = 1-c_0(x)$. We will use $c(q)$ to denote the $d$-tuple of all relative counts, namely $c(q) = (c_0(q), \cdots, c_{d-1}(q))$. The \textit{Hamming distance} between two strings $x,y \in \al_d^n$ is $\hd(x,y) =\frac{|\{i \st x_i \neq y_i\}|}{n}$.

A \textit{density operator} $\rho$ is a positive semi-definite Hermitian operator with trace equal to one, acting on some Hilbert space $\mathcal{H}$. If $\rho_{AE}$ acts on some Hilbert space $\mathcal{H}_A\otimes\mathcal{H}_E$, we write $\rho_E$ to mean the partial trace of $\rho_{AE}$ over $A$ (similarly for other systems).

We use $\mathcal{H}_d$ to denote a $d$-dimensional Hilbert space. Given a basis $\{\ket{v_0}, \cdots, \ket{v_{d-1}}\}$ of $\mathcal{H}_d$, and given a word $i \in \al_d^n$, we write $\ket{v_i}$ to mean $\ket{v_{i_1}}\otimes\cdots\otimes\ket{v_{i_n}}$. If the basis under consideration is clear, we will sometimes write $\ket{i}$ to mean $\ket{v_i}$.

The \textit{Shannon entropy} of a random variable $X$ is denoted by $H(X)$. The \textit{$d$-ary entropy} function $H_d$ is defined as $H_d(x) = d\log_d(d-1) - x\log_d x - (1-x)\log_d(1-x)$. Note that when $d=2$ this is simply the binary Shannon entropy. Finally, we define the \textit{extended $d$-ary entropy} $\Hextd_d(x)$ to be $H_d(x)$ if $0\le x\le 1-1/d$; otherwise $\Hextd_d(x) = 0$ if $x < 0$ or $\Hextd_d(x) = 1$ if $x > 1-1/d$.

Given $\rho_{AE}$ acting on $\mathcal{H}_A \otimes \mathcal{H}_E$, then the \textit{conditional quantum min entropy} \cite{renner2008security} is defined to be:
\begin{equation}
    \Hmin(A|E)_\rho = \sup_{\sigma_E} \max\{\lambda \in \mathbb{R} \st 2^{-\lambda}I_A \otimes \sigma_E - \rho_{AE} \geq 0\}.
\end{equation}
When the $E$ system is trivial, we have $\Hmin(A|E)_\rho = \Hmin(A)_\rho = -\log \max \lambda$, where the maximum is taken over all eigenvalues $\lambda$ of $\rho_A$. In particular, if $\rho_A$ is a classical system (that is, $\rho_A = \sum_ap_a\kb{a}$), then $\Hmin(A)_\rho = -\log\max p_a$. Note that, for any quantum-quantum-classical state $\rho_{AEC} = \sum_{c=0}^N p_c \rho_{AE}^{(c)} \otimes \kb{c}$, then it is easy to prove from the definition of min entropy that the following holds:
\begin{equation}\label{eq:mixed}
    \Hmin(A|EC)_\rho \geq \min_c \Hmin(A|E)_{\rho^{(c)}}.
\end{equation}

Though we will not need it here, a useful interpretation of $\Hmin(A|E)$ for \emph{classical-quantum states (cq-states)} $\rho_{AE}$ (that is, states of the form $\rho_{AE} = \sum_ap_a\kb{a}\otimes\rho_E^{(a)}$) was given in \cite{konig2009operational} as:
\[
\Hmin(A|E)_\rho = -\log P_g(\rho_{AE}),
\]
where $P_g(\rho_{AE})$ is the maximal guessing probability that Eve can guess the value of Alice's register, namely:
\[
P_g(\rho_{AE}) = \max_{\{M_a\}}\sum_ap_atr\left(M_a\rho_{E}^{(a)}\right),
\]
where the maximum is over all POVM operators on $\mathcal{H}_E$.

Finally, the \textit{conditional smooth min entropy} is defined to be \cite{renner2008security}
\begin{equation}
    \Hmin^\epsilon(A|E)_\rho = \sup_{\sigma\in\Gamma_\epsilon(\rho)} \Hmin(A|E)_\sigma.
\end{equation}
where $\Gamma_\epsilon(\rho) = \{\sigma \st \trd{\rho-\sigma}\le \epsilon\}$ and here $\trd{X}$ is the \emph{trace distance} of operator $X$.

For additional notation, given a quantum state $\rho_{AE}$ and an orthonormal basis $Z$ of the $A$ register, we write $\Hmin(Z|E)_\rho$ to mean the conditional min entropy of $\rho_{AE}$ after measuring the $A$ system using the $Z$ basis. If the state $\rho_{AE}$ is pure, namely $\rho_{AE} = \kb{\psi}_{AE}$, we write $\Hmin(A|E)_\psi$. This notation is similar for smooth min entropy.

The following Lemma relating the min entropies of mixed and pure states will be useful to our work later as it will allow us to bound the min entropy of a superposition of states by, instead, computing the min entropy of a corresponding mixture of states:

\begin{lemma}\label{lemma:superposition} (From \cite{bouman2010sampling} based also on a Lemma in \cite{renner2008security}) Let $Z=\{\ket{i}\}$ and $X=\{\ket{x_i}\}$ be two orthonormal bases of $\mathcal{H}_A$. Then for any pure state $\ket{\psi} = \sum_{i\in J} \alpha_i \ket{i} \otimes \ket{\phi_i}_E \in \mathcal{H}_A \otimes \mathcal{H}_E$ (where $\ket{\phi_i}_E$ are arbitrary, normalized states in $\mathcal{H}_E$), if we define the mixed state $\rho = \sum_{i\in J} |\alpha_i|^2 \kb{i} \otimes \kb{\phi_i}$, then
\[\Hmin(X|E)_\psi \geq \Hmin(X|E)_\rho - \log_2|J|.\]

\end{lemma}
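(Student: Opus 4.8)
The plan is to reduce the statement to a comparison of the two post‑measurement classical–quantum states and then feed that comparison into the semidefinite characterization of $\Hmin$ from the definition in Section~\ref{section:notation}. Write $\{\ket{x_j}\}$ for the basis $X$. Measuring the $A$‑register of $\ket\psi$ in the $X$ basis and recording the outcome in a classical register $X$ produces the cq‑state $\sigma_{XE}=\sum_j\kb{j}\otimes\kb{\theta_j}$ with $\ket{\theta_j}=\sum_{i\in J}\alpha_i\braket{x_j|i}\ket{\phi_i}_E$ (subnormalized), while the same measurement applied to the $Z$‑diagonal mixture $\rho$ gives $\tau_{XE}=\sum_j\kb{j}\otimes\sum_{i\in J}|\alpha_i|^2|\braket{x_j|i}|^2\kb{\phi_i}$. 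By the notational convention for $\Hmin(X|E)$ we have $\Hmin(X|E)_\psi=\Hmin(X|E)_\sigma$ and $\Hmin(X|E)_\rho=\Hmin(X|E)_\tau$, so it suffices to relate $\sigma_{XE}$ and $\tau_{XE}$ as operators.

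The crux is the elementary operator inequality: for any finite family of vectors $\ket{a_1},\dots,\ket{a_m}$ in a Hilbert space, $\bigl(\sum_{k=1}^m\ket{a_k}\bigr)\bigl(\sum_{k=1}^m\bra{a_k}\bigr)\leq m\sum_{k=1}^m\kb{a_k}$, which I would establish by expanding $\tfrac12\sum_{k,l=1}^m(\ket{a_k}-\ket{a_l})(\bra{a_k}-\bra{a_l})\geq 0$. Applying this with $\ket{a_i}=\alpha_i\braket{x_j|i}\ket{\phi_i}$ and $m=|J|$ yields, for each outcome $j$, $\kb{\theta_j}\leq |J|\sum_{i\in J}|\alpha_i|^2|\braket{x_j|i}|^2\kb{\phi_i}$. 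Summing against the mutually orthogonal classical register states $\kb{j}$ then gives the single operator inequality $\sigma_{XE}\leq |J|\cdot\tau_{XE}$.

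To conclude, let $\lambda=\Hmin(X|E)_\rho=\Hmin(X|E)_\tau$; by definition there is a density operator $\sigma_E$ on $\mathcal H_E$ with $2^{-\lambda}I_X\otimes\sigma_E-\tau_{XE}\geq 0$. Multiplying by $|J|>0$ and chaining with $\sigma_{XE}\leq|J|\cdot\tau_{XE}$ gives $2^{-(\lambda-\log_2|J|)}I_X\otimes\sigma_E-\sigma_{XE}\geq 0$, so the same $\sigma_E$ witnesses $\Hmin(X|E)_\sigma\geq \lambda-\log_2|J|$, which is exactly the claimed bound. The only step that requires genuine care — as opposed to bookkeeping about the measurement channels — is the operator inequality bounding the ``coherent'' conditional vector $\kb{\theta_j}$ by $|J|$ times the corresponding ``incoherent'' mixture, together with the observation that the witness operator $\sigma_E$ optimal for $\rho$ transfers, scaled by $|J|$, to a feasible witness for $\ket\psi$; everything else is routine.
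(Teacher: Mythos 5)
Your proof is correct, and it is essentially the standard argument behind this lemma in the cited references (the paper itself states the lemma without proof, deferring to Bouman--Fehr and Renner): the key operator inequality $\bigl(\sum_k\ket{a_k}\bigr)\bigl(\sum_k\bra{a_k}\bigr)\leq m\sum_k\kb{a_k}$, applied to compare the coherent and dephased post-measurement cq-states, followed by transferring the optimal witness $\sigma_E$ scaled by $|J|$. The only cosmetic remark is that one should either note the supremum in the definition of $\Hmin$ is attained in finite dimensions, or run the argument for every $\lambda'<\lambda$ and pass to the limit; this does not affect correctness.
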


Quantum min entropy is of vital importance to quantum cryptography as it allows one to determine how many uniform random bits one may extract from a $cq$-state $\rho_{AE}$ that are also independent of Eve. In particular, given a $cq$-state (which, itself, is typically the result of running some quantum cryptographic protocol where the $A$ register may not be uniform random or completely independent of the $E$ register), one may apply the process of \emph{privacy amplification} (typically running the $A$ register through a randomly chosen two-universal hash function) to establish the required uniform and independent random string. If $\sigma_{KE}$ is the result of applying privacy amplification to the initial $\rho_{AE}$ system, where the $K$ register is of size $\ell$ bits, it was shown in \cite{renner2008security} that:
\begin{equation}\label{eq:PA}
\trd{\sigma_{KE} - \frac{I_K}{2^\ell}\otimes\sigma_E} \le \sqrt{2^{(\Hmin^\epsilon(A|E)_\rho - \ell)}} + 2\epsilon.
\end{equation}
Thus, by deriving a lower-bound on the min entropy of the initial state $\rho_{AE}$ before privacy amplification, one may establish how many uniform and independent bits may be extracted (namely, $\ell$) from the state to satisfy the above trace distance inequality up to a desired level of security; e.g., so that the difference between the real state $\sigma_{KE}$ and the ``ideal'' state $I_K/2^\ell\otimes\sigma_E$ (which represents a uniform random string, independent of any other system) is no more than some $\epsilon_{PA}$.

%The min entropy serves as a finite-size counterpart to the asymptotic Shannon entropy. One of the major results highlighting the finite-time power of min entropy is \textit{privacy amplification} (Lemma~\ref{privacy amplification}). By the privacy amplification process, it is possible to extract some $\ell$ secure hashed bits out of a $N$ bit syste, where $\ell$ depends on the size of the (smooth) min entropy $\Hmin^\epsilon(A|E)_\rho$.

%\begin{lemma}\label{privacy amplification} (From \cite{renner2008security}) Let $\rho_{AE}$ be a classical-quantum state, where the size of the $A$ register is equal to $m+n$, and let $\sigma_{KE}$ be the classical-quantum state where the $K$ register is the $\ell$ bit result of processing the $A$ register of $\rho_{AE}$ and applying a two-universal hash function. Then it holds that
%\begin{equation}\label{eq:PA}
    %||\sigma_{KE} - (I_K/2^\ell \otimes \sigma_E)|| \leq 2^{-\frac{1}{2} (\Hmin^\epsilon(A|E)_\rho - \ell)} + 2\epsilon.
%\end{equation}
%\end{lemma}

\section{Quantum Sampling}

In~\cite{bouman2010sampling}, Bouman and Fehr discovered a fascinating connection between classical sampling strategies and quantum sampling. Since our work utilizes this as a foundation to prove our entropic uncertainty relations (later used to prove security of QRNG and QKD protocols), we take the time in this section to provide a review of their main results. Everything in this section, definitions, concepts, and theorems, come from \cite{bouman2010sampling} except when explicitly mentioned. Occasionally, we will make some generalizations and simplifications, however wherever we do so, it will be made clear in the narrative.

Let $\alphabet_d$ be an alphabet with $d$ characters and $N \in \mathbb{N}$ be fixed. A \emph{classical sampling strategy} is a triple $\Psi=(P_T, P_S, f)$, where $P_T$ is a probability distribution over subsets of $\{1, 2, \cdots, N\}$, $P_S$ is a probability distribution over some set $\{0,1\}^*$ called \emph{seed values}, and $f$ is a function:
\begin{equation}
    f : \{0,1\}^* \times \alphabet_d^* \rightarrow \mathbb{R}^k.
\end{equation}
Given a string $q \in \alphabet^N$, the strategy consists of, first, sampling a subset $t$ according to $P_T$; sampling a seed value $s$ according to $P_S$, observing the value of $q_t$ and evaluating $f(s, q_t)$. This evaluation should lead to a ``guess'' of the value of some target function $g:\alphabet_d^*\rightarrow \mathbb{R}^k$ evaluated on the \emph{unobserved} portion of $q$, namely $q_{-t}$. Informally, a good sampling strategy will ensure that, with high probability, $\max_i|f_i(s, q_t) - g_i(q_{-t})| \le \delta$ (i.e., the difference in all coordinates of the output function evaluated on the sampled portion of $q$, compared to the target function evaluated on the unobserved portion, are no greater than $\delta$). Note that above, we are generalizing the sampling result of \cite{bouman2010sampling} to include more general target and guess functions; in \cite{bouman2010sampling}, $k=1$ and $g(x) = w(x)$, the Hamming weight of $x$. However, the proof of their main result is easily seen to hold in this more general case, so long as suitable \emph{classical} strategies are analyzed appropriately (as we do later in this section). Finally, note that in our work, we do not make use of this additional random seed value (which is useful when implementing randomized guess functions $f$); thus, we disregard writing it from here on out and, instead, our function $f$ simply maps strings from $\alphabet_d^{|t|}$ to values in $\mathbb{R}^k$.

Now, fix a subset $t \subset \{1, 2, \cdots, N\}$ and $\delta \ge 0$ and consider the set:
\begin{equation}
\mathcal{G}_{t,\delta}^{f,g} = \mathcal{G}_{t,\delta} = \{i \in \alphabet_d^N \st \max_j |f_j(i_t) - g_j(i_{-t})| \le \delta\}.
\end{equation}

This set consists of all ``good'' words in $\alphabet_d^N$ where, for the given choice of $t$, the estimate produced by $f$ is $\delta$ close to the desired target function on the unobserved portion. Note that, when the context is clear, we will forgo writing the $f$ and $g$ superscripts. From this, the \emph{error probability} of the given classical sampling strategy is defined to be:
\begin{equation}\label{eq:error-prob}
    \epsilon_\delta^{cl}(\Psi) = \max_{q\in\alphabet_d^N}Pr\left(q \not\in \mathcal{G}_{T,\delta}\right),
\end{equation}
where the probability is over the choice of subsets $t$ drawn according to $P_T$ (the notation $\mathcal{G}_{t,\delta}$ is used to denote the set defined above for a fixed $t$ whereas $\mathcal{G}_{T,\delta}$ denotes a random variable over the choice of subset $t$). Note that the randomness here is only over the choice of subset; if the function $f$ need also make random choices, this could be incorporated through the use of the additional seed value. Since our strategies we use here do not need this, we forgo considering it.

From the above definition, it is clear that for any $q \in \alphabet_d^N$, the probability that the sampling strategy fails to produce an accurate estimate of the target function is at most $\epsilon_\delta^{cl}$. The ``cl'' superscript is used to denote that this is the failure probability of the \emph{classical} sampling strategy.

These notions may be adapted to quantum states. Let $\mathcal{H}_d$ be the $d$-dimensional Hilbert space spanned by some orthonormal basis $\mathcal{B} = \{\ket{0}, \cdots, \ket{d-1}\}$. The choice of basis may be arbitrary, however all following definitions are taken with respect to the chosen basis. 

Given a classical sampling strategy $(P_T, f)$ (again, disregarding the seed $P_S$ which we do not use) and a quantum input state $\ket{\psi} \in \mathcal{H}_d^{\otimes N}\otimes\mathcal{H}_E$, a \emph{quantum sampling strategy} may be constructed as follows: first, sample $t$ according to $P_T$; second, measure those qudits in $\mathcal{H}_d^{\otimes N}$ indexed by $t$ using basis $\mathcal{B}$ to produce measurement result $q_t\in\mathcal{A}_d^{|t|}$; finally, evaluate the function $f(q_t)$. The main result from \cite{bouman2010sampling}, informally, is that the remaining \emph{unmeasured} portion of the input state should behave like a superposition of states that are $\delta$ close in the target function $g(\cdot)$ to the estimated value $f(q_{t})$.

More formally, consider:
\[
span\left(\mathcal{G}_{t,\delta}\right) = span\left\{\ket{b} \st b \in \mathcal{G}_{t,\delta}\right\},
\]
 where, by $\ket{b}$, we mean $\ket{b_1}\otimes\cdots\otimes\ket{b_N}$ (again, with respect to the given basis).
Note that, if $\ket{\psi}_{AE}\in span\left(\mathcal{G}_{t,\delta}\right)\otimes\mathcal{H}_E$, and if subset $t$ is actually the one chosen by the sampling strategy, then it is guaranteed that, after measuring those qudits indexed by $t$ in the given basis $\mathcal{B}$ resulting in outcome $q_t$, the remaining unmeasured portion will be in a superposition of states of the form:
\[
\ket{\psi_q}_{A_{-t}E} = \sum_{i\in J_q}\alpha_i\ket{i}\otimes\ket{E_i},
\]
where:
\[
J_q = \{i \in \alphabet_d^{N-|t|} \st \max_j |f_j(q) - g_j(i)| \le \delta\}.
\]

%\iffalse
%\[\textbf{QUANTUM ERROR PROBABILITY?}\]

Formally, the main result from \cite{bouman2010sampling} is stated below, which argues that the input state will be $\epsilon$ close in trace distance to an ideal state where this sampling process always yields the correct guess and this collapse always happens. Furthermore, the $\epsilon$ depends on the error probability of the underlying classical sampling strategy.
%Then if $\ket{\psi}_{AE} \not \in span\left(\mathcal{G}_{t,\delta}\right)\otimes\mathcal{H}_E$, we can approximate the post-measurement portion of $\ket{\psi}$ with that of some other ``ideal state" $\ket{\rho_t} \in span\left(\mathcal{G}_{t,\delta}\right)\otimes\mathcal{H}_E$. On average, $\ket{\rho_t}$ is close in trace distance to $\ket{\psi}$.

\begin{theorem} \label{thm:sample} (From \cite{bouman2010sampling}, though reworded for our application):
Let $\Psi=(P_T, f)$ be a classical sampling strategy with classical failure probability $\epsilon_\delta^{cl}$ for given $\delta > 0$. Then, for every state $\ket{\psi}_{AE} \in \mathcal{H}_{A}\otimes\mathcal{H}_E$ with $\mathcal{H}_A\cong \mathcal{H}_d^{\otimes N}$, there exists a collection of states $\{\ket{\phi^t_{AE}}\}_{t}$ indexed by subsets $t$ of $\{1, \cdots, N\}$ with each $\ket{\phi_{AE}^t}\in span\left(\mathcal{G}_{t,\delta}\right) \otimes \mathcal{H}_E$ such that
\begin{equation}
    \frac{1}{2}\bigg|\bigg|\sum_t P_T(t)\kb{t}\otimes \kb{\psi} -\sum_t P_T(t) \kb{t}\otimes \kb{\phi^t_{AE}} \bigg|\bigg| \leq \sqrt{\epsilon_\delta^{cl}(\Psi)},
\end{equation}
where $t$ represents a sampled subset of $\{1,\dots,N\}$.
\end{theorem}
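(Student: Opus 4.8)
The plan is to follow the argument of Bouman and Fehr \cite{bouman2010sampling} essentially verbatim; the only thing that changes is bookkeeping, since the sets $\mathcal{G}_{t,\delta}$ and the error probability $\epsilon_\delta^{cl}$ are defined here for the general pair of functions $(f,g)$ in exactly the way that makes the original computation go through unchanged. First, for each subset $t\subset\{1,\dots,N\}$, let $P_t = \left(\sum_{i\in\mathcal{G}_{t,\delta}}\kb{i}\right)\otimes I_E$ be the orthogonal projector onto $span\left(\mathcal{G}_{t,\delta}\right)\otimes\mathcal{H}_E$, and define $\ket{\phi^t_{AE}} = P_t\ket{\psi}/\|P_t\ket{\psi}\|$ whenever $P_t\ket{\psi}\neq 0$; if $P_t\ket{\psi}=0$, take $\ket{\phi^t_{AE}}$ to be any unit vector in that subspace (this edge case does not affect the bound). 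By construction each $\ket{\phi^t_{AE}}$ lies in the required subspace $span\left(\mathcal{G}_{t,\delta}\right)\otimes\mathcal{H}_E$.

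Next I would reduce the trace distance to a $P_T$-average. Because the states $\{\ket{t}\}$ are mutually orthogonal, the operator $\sum_t P_T(t)\kb{t}\otimes(\kb{\psi}-\kb{\phi^t_{AE}})$ is block-diagonal in $t$, so its trace norm is $\sum_t P_T(t)\,\trd{\kb{\psi}-\kb{\phi^t_{AE}}}$, and the left-hand side of the claimed inequality equals $\sum_t P_T(t)\cdot\frac{1}{2}\trd{\kb{\psi}-\kb{\phi^t_{AE}}}$. For two pure states there is the closed form $\frac{1}{2}\trd{\kb{\psi}-\kb{\phi^t_{AE}}} = \sqrt{1-|\langle\psi|\phi^t_{AE}\rangle|^2}$; since $\ket{\phi^t_{AE}}$ is the normalized projection of $\ket{\psi}$ onto the range of $P_t$, we have $|\langle\psi|\phi^t_{AE}\rangle|^2 = \langle\psi|P_t|\psi\rangle$, hence $1-|\langle\psi|\phi^t_{AE}\rangle|^2 = \langle\psi|(I-P_t)|\psi\rangle$. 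Thus the left-hand side equals $\sum_t P_T(t)\sqrt{\langle\psi|(I-P_t)|\psi\rangle}$, and by Jensen's inequality (concavity of $\sqrt{\cdot}$) this is at most $\sqrt{\sum_t P_T(t)\langle\psi|(I-P_t)|\psi\rangle}$.

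It then remains to show $\sum_t P_T(t)\langle\psi|(I-P_t)|\psi\rangle\le\epsilon_\delta^{cl}(\Psi)$. Expand $\ket{\psi} = \sum_{b\in\mathcal{A}_d^N}\ket{b}\otimes\ket{\theta_b}_E$ with (unnormalized) $\ket{\theta_b}\in\mathcal{H}_E$. Since $I-P_t = \left(\sum_{b\notin\mathcal{G}_{t,\delta}}\kb{b}\right)\otimes I_E$ acts diagonally in the sampling basis of $\mathcal{H}_A$, we get $\langle\psi|(I-P_t)|\psi\rangle = \sum_{b\notin\mathcal{G}_{t,\delta}}\|\theta_b\|^2$; summing over $t$ and swapping the order of summation,
\[
\sum_t P_T(t)\langle\psi|(I-P_t)|\psi\rangle = \sum_{b\in\mathcal{A}_d^N}\|\theta_b\|^2\,\Pr_{t\sim P_T}\left(b\notin\mathcal{G}_{t,\delta}\right)\le\epsilon_\delta^{cl}(\Psi)\sum_{b}\|\theta_b\|^2 = \epsilon_\delta^{cl}(\Psi),
\]
where the inequality is precisely the definition~(\ref{eq:error-prob}) applied to each word $b$, and the last equality is $\langle\psi|\psi\rangle = 1$. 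Chaining this with the reductions above yields the theorem.

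I do not expect a genuine obstacle here; the one step carrying the real content is the observation in the last paragraph that $P_t$ is diagonal in the sampling basis of $\mathcal{H}_A$, so that the ``bad amplitude weight'' of the superposition $\ket{\psi}$ is exactly a $P_T$-average of the classical bad event $\{b\notin\mathcal{G}_{t,\delta}\}$ — this is where the quantum statement inherits the classical failure probability. Everything else (block-diagonality in $t$, the pure-state trace-distance identity, Jensen) is routine. It is worth noting explicitly that none of these steps uses the specific form of $g$ or $f$: they only use that $\mathcal{G}_{t,\delta}$ is a set of computational-basis words and that $\epsilon_\delta^{cl}$ upper-bounds $\Pr_t(b\notin\mathcal{G}_{t,\delta})$ for every $b$, which is why the generalization from the Hamming-weight case of \cite{bouman2010sampling} to arbitrary $(f,g)$ with $k\ge 1$ is immediate.
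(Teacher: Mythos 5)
Your proposal is correct and follows essentially the same route as the paper, which simply defers to Bouman and Fehr's construction: defining each ideal state as the normalized projection of $\ket{\psi}_{AE}$ onto $span\left(\mathcal{G}_{t,\delta}\right)\otimes\mathcal{H}_E$ and noting that nothing in the argument depends on the specific form of $f$ or $g$. You have filled in the details of that argument (block-diagonality in $t$, the pure-state trace-distance identity, Jensen, and the reduction to the classical failure probability) accurately, including the correct identification of where the classical error bound enters.
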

\begin{proof}
In Bouman and Fehr's work \cite{bouman2010sampling}, it was shown that for a fixed $\ket{\psi}_{AE}$ it holds that
\begin{equation}
\min_{\{\ket{\phi^t_{AE}}\}} \trd{\sum_t P_T(t)\kb{t}\otimes \kb{\psi}_{AE} -\sum_t P_T(t) \kb{t}\otimes \kb{\phi^t_{AE}}} \le \sqrt{\epsilon_\delta^{cl}}
\end{equation}
where the minimum is over all $\{\ket{\phi^t_{AE}}\} \subset span\left(\mathcal{G}_{t,\delta}\right) \otimes \mathcal{H}_E$, for a sampling strategy where the target function was $g(x) = w(x)$. However, in their proof, the above is shown directly by projecting the input $\ket{\psi}_{AE}$ into the space $span\left(\mathcal{G}_{t,\delta}\right) \otimes \mathcal{H}_E$, thus directly constructing the ideal states. Namely, the ideal states were defined by the decomposition $\ket{\psi}_{AE} = \braket{\widetilde{\phi^t_{AE}} | \psi_{AE}} \ket{\widetilde{\phi^t_{AE}}} + \braket{{\phi^t_{AE}} | \psi_{AE}}\ket{\phi^t_{AE}}$ where the $\ket{\widetilde{\phi^t_{AE}}}$ lives in a space orthogonal to the ideal. This minimum is therefore attained by these ideal states. Furthermore, there is no specific reason in this construction to restrict to target functions that are the Hamming weight, nor to target functions that are one-dimensional. Indeed, by considering any definition of $\mathcal{G}_{t,\delta}$, their construction and the subsequent analysis follows identically assuming the error probability is defined as in Equation \ref{eq:error-prob} based on the set $\mathcal{G}_{t,\delta}$. The important difference comes in the analysis of the classical sampling strategy in order to compute $\epsilon_\delta^{cl}$.
\end{proof}

The fascinating thing about Theorem \ref{thm:sample} is that, by choosing suitable classical sampling strategies, one may analyze the behavior of ideal states which always behave appropriately for the given strategy. From this, and the fact that the real state is close, in trace distance, to these ideal states (on average over the randomness in the sampling strategy), one may then promote the analysis from the ideal state to the actual input. Already in \cite{krawec2019quantum,krawec2020new}, we used this to prove novel, and useful, quantum entropic uncertainty relations which were then used to analyze particular QRNG protocols. We now generalize these results, analyze a more powerful QRNG protocol, and also show how this can be used to develop three-party entropic uncertainty relations (involving $A$, $B$, and $E$) with applications to high-dimensional QKD protocols. We show that, furthermore, this provides highly optimistic secure bit generation rates for both the QRNG and QKD protocols in a variety of scenarios. However, to analyze these protocols, we first require some important classical sampling strategies.

\subsection{Classical Sampling Strategies} \label{section:sample-strats}

As discussed, Theorem \ref{thm:sample} allows us to consider classical sampling strategies and use these to analyze quantum protocols. Here we discuss four classical sampling strategies which we denote $\samp_0, \samp_1,$ $\samp_2,$ and $\samp_{2+0}$. Strategy $\samp_0$ was analyzed in \cite{bouman2010sampling} and we use this to bound the error of the other strategies. The other strategies involve one party ($\samp_1$) or two parties ($\samp_2$ and $\samp_{2+0}$) and will be used later when deriving our entropic uncertainty relations.

$ $\newline\textbf{One-Party HD-Restricted-Sampling $\samp_0$:} In \cite{bouman2010sampling}, the following natural sampling strategy was analyzed which we denote here as $\Psi_0$. We use this result to bound the error in our other sampling strategies to be discussed next. Let $q\in\mathcal{A}_d^{n+m}$ be a string and the target function $g(x) = w(x)$. The strategy, first, chooses a subset $t$ of $\{1,\cdots, n+m\}$ of size $m$, uniformly at random and observes string $q_t$. Next, it outputs $f(q_t) = w(q_t)$, an estimate of the Hamming weight of the unobserved portion, namely $w(q_{-t})$. We call this the HD-Restricted-Sampling strategy as it is high-dimensional, however it only looks at the Hamming weight, ignoring the counts of other characters. The following Lemma was proven in \cite{bouman2010sampling}:
\begin{lemma}\label{lemma:samp-psi0}
(From \cite{bouman2010sampling}): Let $\delta > 0$ and $d \ge 2$. Then the failure probability of the above described sampling strategy $\Psi_0$ for $m \le n$ is:
\[
\epsilon_\delta^{cl}(\Psi_0) \le 2\exp\left(\frac{-\delta^2m(n+m)}{m+n+2}\right).
\]
\end{lemma}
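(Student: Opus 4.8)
The plan is to reduce the claim to a tail bound for the hypergeometric distribution and then invoke a concentration inequality for sampling without replacement. First, note that for $\Psi_0$ both the guess $f(q_t)=w(q_t)$ and the target $g(q_{-t})=w(q_{-t})$ depend on $q$ only through its binarization: the bit string $b\in\{0,1\}^{n+m}$ with $b_j=0$ if $q_j=0$ and $b_j=1$ otherwise. Hence the event $q\notin\mathcal{G}_{T,\delta}$ (that is, $|w(q_T)-w(q_{-T})|>\delta$) depends only on $b$, so the maximum over $q\in\alphabet_d^{n+m}$ in Equation~\ref{eq:error-prob} equals the maximum over bit strings $b$; in particular the final bound is independent of $d\ge 2$, matching the statement.

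So fix a bit string $b$ with exactly $k$ ones among its $N:=n+m$ coordinates, and let $X$ be the number of ones among the $m$ coordinates indexed by the random set $t$ (drawn uniformly over all size-$m$ subsets of $\{1,\dots,N\}$). Then $X$ is hypergeometric with parameters $(N,k,m)$ and mean $\mathbb{E}[X]=mk/N$. A one-line computation gives $w(b_t)-w(b_{-t})=\frac{X}{m}-\frac{k-X}{n}=\frac{N\,(X-\mathbb{E}[X])}{mn}$, so
\[
\Pr\bigl[\,|w(b_t)-w(b_{-t})|>\delta\,\bigr]=\Pr\bigl[\,|X-\mathbb{E}[X]|>\tfrac{\delta m n}{N}\,\bigr].
\]
Now I would apply a Hoeffding-type tail bound for sampling without replacement (Hoeffding's inequalities hold verbatim here, and the sharper Serfling-style version supplies the without-replacement correction factor $\approx N/(m(n+1))$ that is needed for the precise constant). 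Each of the two one-sided deviations is bounded by a term of the form $\exp(-c\,\delta^2 m^2 n^2/N^2)$, a union bound over the two tails contributes the factor $2$, and substituting $\tau=\delta m n/N$ and simplifying under the hypothesis $m\le n$ collapses the exponent to $\delta^2 m(n+m)/(m+n+2)$. Since this holds for every $b$, hence every $q\in\alphabet_d^{n+m}$, it bounds $\epsilon_\delta^{cl}(\Psi_0)$.

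The main obstacle is pinning down the exact exponent: a crude application of Hoeffding's inequality together with $m\le n$ only yields something like $2\exp(-\delta^2 m/2)$, and recovering the stated $m(n+m)/(m+n+2)$ requires the without-replacement (Serfling-type) sharpening of the tail bound plus a careful use of $m\le n$ in the algebra — one can check that the relevant inequality is essentially tight at $m=n$. Everything else (the binary reduction, the hypergeometric identification, and the rewriting of the bad event) is routine. I note that this lemma is due to Bouman and Fehr \cite{bouman2010sampling}; the sketch above is a reconstruction of their argument, and for the paper one would simply cite it.
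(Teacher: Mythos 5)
The paper itself gives no proof of this lemma --- it is stated verbatim as a citation to Bouman and Fehr --- so there is nothing internal to compare against; what matters is whether your reconstruction is sound, and it is. The binarization step, the hypergeometric identification, and the identity $w(b_t)-w(b_{-t})=\tfrac{N}{mn}(X-\mathbb{E}[X])$ are all correct. The one step you flagged as the obstacle also checks out: Serfling's without-replacement bound gives a one-sided tail of $\exp\bigl(-\tfrac{2m\tau^2}{1-(m-1)/N}\bigr)$ with $\tau=\delta n/N$, i.e.\ an exponent of $\tfrac{2m\delta^2 n^2}{N(n+1)}$, and the required inequality $\tfrac{2mn^2}{N(n+1)}\ge\tfrac{mN}{N+2}$ reduces to $2n^2(N+2)\ge N^2(n+1)$; writing $h(m)=2n^2(n+m+2)-(n+m)^2(n+1)$ one finds $h'(m)=2n^2-2(n+m)(n+1)<0$ for all $m\ge 0$ and $h(n)=0$, so the inequality holds on all of $0\le m\le n$ with equality exactly at $m=n$, confirming your tightness remark. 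The only cosmetic caveat is that Serfling's inequality is one-sided, so the factor $2$ comes from applying it to both $X_i$ and $1-X_i$, which you implicitly do via the union bound over the two tails. Since the paper simply cites \cite{bouman2010sampling} here, your closing remark that one would just cite the result is the right call; the reconstruction is a correct account of why the bound holds.
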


We comment that there is nothing special in the above sampling strategy, or their proof, about the use of the Hamming weight in the above Lemma; instead one could replace the target function $g(x)$ with any single $c_j(x)$ or $1-c_j(x)$ (to count the number of letters equal to, or not equal to, $j$ respectively) and the same bound will follow (for a single, fixed but arbitrary, $j$). See \cite{bouman2010sampling}.

$ $\newline\textbf{One-Party HD-Full-Sampling $\samp_1$:} In our work, here, we will need three additional sampling strategies. The first sampling strategy, which we denote $\samp_1$, is a one-party strategy involving Alice only and will be used for our QRNG analysis later. The strategy works for strings in $\alphabet_d^N$, where $N=n+m$ and the target function is $g(x) = (c_0(x),\dots,c_{d-1}(x))$ where $c_i(x)$ is the relative number of times symbol $i$ appears in the word $x$ (as defined in Section \ref{section:notation}). First, the strategy $\samp_1$ chooses a subset $t$ of size $m$ from $\{1, \cdots, N\}$ uniformly at random and observes the string $q_t \in \alphabet_d^{m}$. Finally, $\samp_1$ outputs $f(q_t) = (c_0(q_t),\dots,c_{d-1}(q_t))$ as an estimate of the relative counts of the unobserved $q_{-t}$. The proceeding Lemma determines an upper bound on the error probability of the sampling strategy $\samp_1$.

\begin{lemma}\label{lemma:samp-psi1} Let $\delta > 0$ and $d \geq 2$. Then the failure probability of the above described sampling strategy $\samp_1$ when $m \le n$ is:
\[\epsilon_\delta^{cl}(\samp_1) \leq 2d\exp\left(-m\delta^2\frac{m+n}{m+n+2}\right).\]
\end{lemma}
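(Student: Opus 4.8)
The plan is to reduce the failure analysis of the multi-coordinate strategy $\samp_1$ to $d$ separate applications of the single-coordinate bound already established in Lemma \ref{lemma:samp-psi0}, and then apply a union bound. Recall that $\samp_1$ fails on input $q$ (for the chosen subset $t$) precisely when there exists some coordinate $j \in \alphabet_d$ with $|c_j(q_t) - c_j(q_{-t})| > \delta$. Equivalently, writing $F_j$ for the event $\{|c_j(q_t) - c_j(q_{-t})| > \delta\}$, the overall failure event is $\bigcup_{j=0}^{d-1} F_j$, so that $Pr(q \notin \mathcal{G}_{T,\delta}) = Pr\left(\bigcup_j F_j\right) \le \sum_{j=0}^{d-1} Pr(F_j)$ by the union bound, where all probabilities are over the uniform choice of $t$ of size $m$.

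Next I would invoke the remark immediately following Lemma \ref{lemma:samp-psi0}: for each \emph{fixed} $j$, the single-coordinate sampling strategy with target function $g_j(x) = c_j(x)$ is exactly of the form analyzed in \cite{bouman2010sampling}, and so its failure probability on any input is bounded by $Pr(F_j) \le 2\exp\left(\frac{-\delta^2 m(n+m)}{m+n+2}\right)$ whenever $m \le n$. Since this bound is uniform in $j$ and in $q$, summing over the $d$ coordinates gives
\[
\max_{q \in \alphabet_d^{N}} Pr(q \notin \mathcal{G}_{T,\delta}) \le \sum_{j=0}^{d-1} 2\exp\left(\frac{-\delta^2 m(n+m)}{m+n+2}\right) = 2d\exp\left(-m\delta^2\frac{m+n}{m+n+2}\right),
\]
which is precisely the claimed bound $\epsilon_\delta^{cl}(\samp_1) \le 2d\exp\left(-m\delta^2\frac{m+n}{m+n+2}\right)$.

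The only genuinely delicate point is making sure the reduction is legitimate: one must check that the event ``$\samp_1$ fails'' really does decompose as the union of the per-coordinate failure events, i.e., that the set $\mathcal{G}_{t,\delta}$ for $\samp_1$ (defined with the vector target function and the $\max_j$ inside) is exactly the intersection over $j$ of the single-coordinate good sets. This is immediate from the definition of $\mathcal{G}_{t,\delta}$ with $\max_j |f_j(i_t) - g_j(i_{-t})| \le \delta$, but it is worth stating explicitly since it is what licenses passing from the vector-valued strategy to $d$ scalar-valued strategies each covered by \cite{bouman2010sampling}. Everything else — the union bound and the arithmetic simplification $\frac{m(n+m)}{m+n+2} = m \cdot \frac{m+n}{m+n+2}$ — is routine, so I do not anticipate any real obstacle here; the lemma is essentially a bookkeeping corollary of Lemma \ref{lemma:samp-psi0}.
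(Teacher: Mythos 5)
Your proposal is correct and follows essentially the same route as the paper's proof: identify each coordinate strategy $(P_T, c_j)$ with $\samp_0$ (via the remark following Lemma \ref{lemma:samp-psi0}), apply the union bound over the $d$ coordinates, and use the uniformity of the single-coordinate bound in $q$ to pull the maximum through the sum. No gaps; the explicit check that $\mathcal{G}_{t,\delta}$ is the intersection of the per-coordinate good sets is a nice touch the paper leaves implicit.
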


\begin{proof}
Note that, for any $j$, $(P_T, c_j)$ is exactly the strategy $\samp_0$ (though, instead of looking at the number of strings with a certain Hamming weight, we are looking at the number of strings with a certain character count). Thus, using the bound provided by Lemma \ref{lemma:samp-psi0} we find
\allowdisplaybreaks
\begin{align*}
    \epsilon_\delta^{cl} &= \max_{q\in\alphabet_d^{m+n}}Pr\left(q \not\in \mathcal{G}_{T,\delta}(\Psi_1)\right) \\
    %&= \max_{q\in\alphabet_d^{m+n}}Pr\left(\max_j |f_j(q_t) - g_j(q_{-t})| > \delta\right) \\
    %&= \max_{q\in\alphabet_d^{m+n}}Pr\left(\bigcup_j |f_j(q_t) - g_j(q_{-t})| > \delta\right) \\
    %&\le \max_{q\in\alphabet_d^N}\sum_j Pr\left(|f_j(q_t) - g_j(q_{-t})| > \delta\right) \\
    &\le \sum_j \max_{q\in\alphabet_d^{m+n}} Pr\left(|f_j(q_t) - g_j(q_{-t})| > \delta\right) \\
    %&\le \sum_j 2\exp\left(-m\delta^2\frac{m+n}{m+n+2}\right) \\
    &\le 2d\exp\left(-m\delta^2\frac{m+n}{m+n+2}\right).
\end{align*}
\end{proof}

$ $\newline\textbf{Two-Party HD-Sampling $\samp_2$:} The second strategy we require will be used for our two-party applications later and we denote by $\samp_2$. Here, we have an input string $q = (q^A, q^B) \in \alphabet_d^N\times \alphabet_d^N$, where $N=n+m$. The strategy will first choose a subset $t \subset\{1, \cdots, N\}$ of size $m$ uniformly at random. The strategy will then sample $q^A_t$ and $q^B_t$; that is, it will observe the $q^A$ portion and $q^B$ portion individually, using the same subset (this may be written strictly using our earlier definitions, however such strict formality is not enlightening). The target function is $g(q^A_{-t},q^B_{-t}) = \hd(q^A_{-t},q^B_{-t})$ (where $\hd(x,y)$ is the relative Hamming distance of words $x$ and $y$ as defined in Section \ref{section:notation}) and the output will be $f(q^A_t, q^B_t) = \hd(q^A_t,q^B_t)$. Again, we may bound the error probability of this strategy using Lemma \ref{lemma:samp-psi0}.

\begin{lemma}\label{lemma:samp-psi2}
Let $\samp_2$ be the strategy defined above; $\delta > 0$ and $m\le n$. Then $\epsilon^{cl}_\delta(\samp_2) \le \epsilon^{cl}_\delta(\Psi_0)$.
\end{lemma}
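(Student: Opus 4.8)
The plan is to reduce the analysis of $\samp_2$ directly to that of $\samp_0$ by replacing the pair of words with a single binary ``mismatch'' word. Given an input $q=(q^A,q^B)\in\alphabet_d^N\times\alphabet_d^N$, define $r=r(q)\in\alphabet_2^N$ by setting $r_i=1$ if $q^A_i\neq q^B_i$ and $r_i=0$ otherwise. Two elementary facts drive the argument. First, for any index set $u\subseteq\{1,\dots,N\}$ one has $\bigl(r(q)\bigr)_u = r\bigl(q^A_u,q^B_u\bigr)$: forming the mismatch word commutes with restriction to a substring, because the mismatch bit at position $i$ depends only on the $i$-th coordinates of $q^A$ and $q^B$. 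Second, by the definitions of relative Hamming distance and relative Hamming weight in Section~\ref{section:notation}, $\hd(x,y)=\hw\bigl(r(x,y)\bigr)$ for any two words of equal length. Combining these, for the random subset $t$ drawn by the strategy we get $\hd(q^A_t,q^B_t)=\hw(r_t)$ and $\hd(q^A_{-t},q^B_{-t})=\hw(r_{-t})$.

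Next I would note that $\samp_2$ and $\samp_0$ use the same subset distribution $P_T$ (uniform over subsets of $\{1,\dots,N\}$ of size $m$), and that under the above identification the ``bad event'' defining failure of $\samp_2$ on input $q$, namely $\bigl|\hd(q^A_t,q^B_t)-\hd(q^A_{-t},q^B_{-t})\bigr|>\delta$, is \emph{literally} the bad event defining failure of $\samp_0$ on the binary input $r(q)$, namely $\bigl|\hw(r_t)-\hw(r_{-t})\bigr|>\delta$. Hence $Pr\bigl(q\not\in\mathcal{G}_{T,\delta}(\samp_2)\bigr)=Pr\bigl(r(q)\not\in\mathcal{G}_{T,\delta}(\Psi_0)\bigr)$ for every $q$, where both probabilities are over the choice of $t$. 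Taking the maximum over all inputs and using that $r(\cdot)$ maps into $\alphabet_2^N$,
\[
\epsilon_\delta^{cl}(\samp_2)=\max_{q}Pr\bigl(q\not\in\mathcal{G}_{T,\delta}(\samp_2)\bigr)\le \max_{r\in\alphabet_2^N}Pr\bigl(r\not\in\mathcal{G}_{T,\delta}(\Psi_0)\bigr)=\epsilon_\delta^{cl}(\Psi_0),
\]
where the last quantity is exactly the error probability of $\samp_0$, to which Lemma~\ref{lemma:samp-psi0} applies (with $d=2$; note its bound does not depend on the alphabet size). This yields the claimed inequality.

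This argument is mostly bookkeeping, and I anticipate no genuine obstacle. The one point that must be stated carefully is the commutation property $\bigl(r(q)\bigr)_u=r(q^A_u,q^B_u)$, since it is what lets us move between ``restrict then compare'' and ``compare then restrict''; it holds precisely because the mismatch pattern is computed coordinate by coordinate. The only conceptual content is the observation that the Hamming distance of two $d$-ary words is the Hamming weight of their binary mismatch pattern, which collapses this two-party estimation problem onto the already-solved one-party problem $\samp_0$.
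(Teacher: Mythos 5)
Your proof is correct and follows essentially the same reduction as the paper: both collapse the two-party Hamming-distance estimation onto the one-party Hamming-weight strategy $\samp_0$ by replacing the pair $(q^A,q^B)$ with a single word whose weight encodes their mismatch pattern. The only cosmetic difference is that you use a binary indicator word $r(q)$ while the paper uses the character-wise difference $q^A-q^B$ modulo $d$; since the weight depends only on where the characters disagree, the two are interchangeable.
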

\begin{proof}
Let $N = n+m$ and $\mathcal{G}_{t,\delta} = \{(i,j) \in \alphabet_d^N\times\alphabet_d^N \st |\hd(i_t,j_t) - \hd(i_{-t},j_{-t})| \le \delta \}$ and $\mathcal{G}_{t,\delta}' = \{i\in\mathcal{A}^N \st |w(i_t) - w(i_{-t})|\le \delta\}.$ Pick $q = (q^A,q^B) \in \alphabet_d^N\times\alphabet_d^N$ and let $x = q^A - q^B$, where the subtraction here is character-wise, modulo $d$, in the given alphabet. Clearly $w(x_t) = \hd(q^A_t, q^B_t)$, and similarly for $x_{-t}$. Thus, $q \in \mathcal{G}_{t,\delta}$ if and only if $x \in \mathcal{G}_{t,\delta}'$. Hence, for every $q=(q^A,q^B)$, it holds that:
\[
Pr\left(q^Aq^B\not\in\mathcal{G}_{T,\delta}\right) = Pr\left(q^A-q^B \not \in \mathcal{G}'_{T,\delta}\right) \le \max_{x\in\alphabet_d^N}Pr\left(x\not\in\mathcal{G}'_{T,\delta}\right) = \epsilon^{cl}_\delta(\Psi_0).
\]
Since this holds for any $q=(q^A,q^B)$, we're done.
\end{proof}

Finally, we define a second two-party sampling strategy which combines $\samp_2$ with $\samp_0$; we denote this strategy by $\samp_{2+0}$.  For this strategy, the target function is now $g(q_{-t}^A, q_{-t}^B) = (\hd(q_{-t}^A,q_{-t}^B), c_{b^*}(q_{-t}^A))$ for some given, fixed, distinguished index $b^* \in \alphabet_d$ (we later call this the ``count index'').  This sampling strategy chooses a subset according to $\samp_2$ and outputs a guess $f(q^A_t,q^B_t) = (\hd(q^A_t,q^B_t), c_{b^*}(q^A_t)).$  It is not difficult to show from Lemmas \ref{lemma:samp-psi0} and \ref{lemma:samp-psi2} that the error probability of this strategy is:
\begin{equation}\label{eq:samp-psi20}
\epsilon_\delta^{cl}(\samp_{2+0}) \le \epsilon_\delta^{cl}(\samp_{2}) + \epsilon_\delta^{cl}(\samp_{0}) \le 4\exp\left(\frac{-\delta^2m(n+m)}{m+n+2}\right).
\end{equation}

\section{Quantum Sampling Based Entropic Uncertainty}

In \cite{krawec2019quantum,krawec2020new}, we showed how the technique of quantum sampling, introduced in \cite{bouman2010sampling} and discussed in the previous section, can be used to prove entropic uncertainty relations bounding the smooth quantum min entropy and the Shannon entropy, as a function of the overlap of two projective measurements. Our first work \cite{krawec2019quantum} introduced a novel entropic uncertainty relation applicable to qubits (i.e., $d=2$) only and with a \emph{fixed} sampling strategy; in \cite{krawec2020new}, we expanded the result to work for qudits ($d\ge 2$), however only with a partial basis measurement and a particular, fixed, sampling strategy. Here, we discuss and generalize this result to work with more general sampling strategies allowing a ``plug-and-play'' entropic uncertainty relation for various classical sampling strategies. Indeed, as shown in this section, one may introduce an arbitrary classical sampling strategy (perhaps one that is useful for a particular cryptographic application); one need only compute the error probability of the given classical strategy, along with the size of a set similar to $\mathcal{G}$ (generally a classical combinatorial proof) to derive a result applicable to a quantum system. The proof of this follows the same two-step approach we introduced in \cite{krawec2019quantum,krawec2020new} only with suitable generalizations at certain points.

To describe our sampling based entropic uncertainty relations, we require an \emph{experiment} which takes as input a quantum state $\rho$ acting on $\mathcal{H}_T\otimes\mathcal{H}_A\otimes\mathcal{H}_E$ where the $A$ portion is an $N$-fold tensor of some smaller $d$-dimensional Hilbert space and the $T$ register is a Hilbert space spanned by orthonormal basis $\{\ket{t}\}$ where $t \subset \{1, \cdots, N\}$. The experiment also requires an orthonormal basis $X = \{\ket{x_0},\cdots\ket{x_{d-1}}\}$.
%POVM $\Lambda=\{\Lambda_0,\cdots\Lambda_r\}$ where each $\Lambda_i$ is of the form $\kb{x_i}$ for some orthonormal basis $\{\ket{x_0},\cdots\ket{x_{d-1}}\}$ (if $r < d-1$ then $\Lambda_r = I - \sum_i\Lambda_i$). That is, the POVM $\Lambda$ performs a basis measurement, however is only able to distinguish $r$ of the $d$ basis vectors (of course, if $r = d-1$ then it can distinguish all basis vectors).

The experiment will first choose a random subset $t$ by measuring the $T$ register. It will then measure the $A$ portion of $\rho$, indexed by $t$, using the given $X$ basis. This measurement results in outcome $q \in \al_d^{|t|}$ and a post-measurement state $\rho(q,t)$, acting on the unmeasured portion of $\mathcal{H}_A$ and $\mathcal{H}_E$. We denote this experiment by $(t,q, \rho_{A'E}(q,t))\experiment{\rho_{TAE}, X}$. Note that the experiment also returns the subset chosen. Sampling based entropic uncertainty relations allow one to bound the min entropy in the remaining post-measured state, assuming an alternative measurement were to be made on the $A$ portion of it. This bound is a function of the measurement overlap and the classical measurement outcome $q$.

The main result from \cite{krawec2019quantum,krawec2020new} was to relate the min entropy in the remaining portion of the system as a function of the measurement overlap and the binary Shannon entropy (or, in the case of \cite{krawec2020new}, the $d$-ary Shannon entropy) of the relative Hamming weight of the observed outcome $q$ after running the experiment. However, the proof technique used there can be applied to a more general setting allowing for arbitrary sampling strategies and, in particular, to bound the min-entropy as a function of the measurement overlap and the size of a particular set $J_q$ of classical strings that are $\delta$-close to the observed $q$.

\begin{theorem} \label{thm:twoparty}
Let $0 < \beta < 1/2$ and $\samp$ be a classical sampling strategy with error probability $\epsilon_\delta^{cl}$ for given $\delta > 0$. Let $\epsilon = \sqrt{\epsilon^{cl}_\delta}$, and let $\rho_{AE}$ be an arbitrary quantum state acting on space $\mathcal{H}_A \otimes \mathcal{H}_E$, where $\mathcal{H}_A \cong \mathcal{H}_d^{\otimes N}$ for $d \geq 2$. Let $Z = \{\ket{z_i}\}_{i=0}^{d-1}$ and $X = \{\ket{x_i}\}_{i=0}^{d-1}$ be two orthonormal bases of $\mathcal{H}_d$. Furthermore, let $(t, q, \rho(t, q)) \leftarrow \textbf{Exp}(\sum_tP_T(t)\kb{t}\otimes \rho_{AE}, X)$, where the sum is over all possible subsets of $\{1, 2, \dots, N\}$ that could be chosen by $\samp$ and $P_T(t)$ is the probability of subset $t$ being chosen as determined by the given classical sampling strategy. Finally, let $\gamma = -\log_2\max_{a,b}|\braket{z_a|x_b}|^2$.  Then, it holds that:
\begin{equation}
    Pr\left(\Hmin^{4\epsilon+2\epsilon^\beta}(Z|E)_{\rho(t,q)} + \log_2 |J_{q}^{(N-|t|)}| \geq (N-|t|)\gamma\right) \geq 1-2\epsilon^{1-2\beta},
\end{equation}
where
\begin{equation}\label{eq:Jq}
    J_{q}^{(n)} = \{i\in\alphabet_d^{n} \st \max_j|f_j(i) - g_j(q)| \le \delta\}.
\end{equation}
Above the probability is over the randomness in the experiment (namely the subset chosen and the resulting measurement outcome $q$).
\end{theorem}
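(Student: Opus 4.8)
The plan is to reuse the two-step ``lift from an ideal state'' argument of \cite{krawec2019quantum,krawec2020new}, now driven by the general sampling statement of Theorem \ref{thm:sample}. First I would purify: let $\ket{\psi}_{AER}$ be a purification of $\rho_{AE}$, so the experiment acts on $\sum_t P_T(t)\kb{t}\otimes\kb{\psi}_{AER}$ with the sample measurement performed in the $X$ basis. Applying Theorem \ref{thm:sample} with $\mathcal{G}_{t,\delta}$ defined relative to $X$ and to the $f,g,\delta$ of $\samp$ (and with the theorem's ``$E$'' register taken to be $ER$) yields ideal states $\ket{\phi^t}_{AER}\in span\left(\mathcal{G}_{t,\delta}\right)\otimes\mathcal{H}_{ER}$ with $\trd{\sum_t P_T(t)\kb{t}\otimes\left(\kb{\psi}-\kb{\phi^t}\right)}\le\sqrt{\epsilon^{cl}_\delta}=\epsilon$.

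Second, I would analyze the ideal states exactly. Running the experiment on $\ket{\phi^t}_{AER}$ means measuring $T$ to obtain $t$ and then measuring the $t$-indexed $A$-qudits in $X$; since $\ket{\phi^t}\in span\left(\mathcal{G}_{t,\delta}\right)\otimes\mathcal{H}_{ER}$, any outcome $q$ of positive probability leaves the unmeasured qudits in a state of the form $\ket{\phi^t(q)}=\sum_{i\in J_q}\alpha_i\ket{x_i}\otimes\ket{E_i}$, where $J_q=J_{q}^{(N-|t|)}$ is the set (\ref{eq:Jq}) of words consistent with the observation. Applying Lemma \ref{lemma:superposition} with the two bases in opposite roles (decompose in $X$, measure in $Z$, with $J=J_q$) gives $\Hmin(Z|ER)_{\phi^t(q)}\ge\Hmin(Z|ER)_{\bar\rho}-\log_2|J_q|$, where $\bar\rho=\sum_{i\in J_q}|\alpha_i|^2\kb{x_i}\otimes\kb{E_i}$. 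Regarding $\bar\rho$ as a classical mixture over $i$ and using (\ref{eq:mixed}) reduces this to $\min_{i\in J_q}\Hmin(Z|ER)_{\sigma_i}$, where $\sigma_i$ is the classical state obtained by measuring $\ket{x_i}$ in $Z$, tensored with the pure register $\ket{E_i}$; for each $i$ this equals $-\log_2$ of the largest probability of a product distribution, which is at most $\prod_{k=1}^{N-|t|}\max_{a,b}|\braket{z_a|x_b}|^2=2^{-(N-|t|)\gamma}$, so $\Hmin(Z|ER)_{\sigma_i}\ge(N-|t|)\gamma$. Discarding $R$ from the conditioning then gives, for every $(t,q)$ of positive probability under the ideal state,
\[
\Hmin(Z|E)_{\phi^t(q)}\ \ge\ (N-|t|)\gamma-\log_2|J_{q}^{(N-|t|)}|.
\]

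Third, I would transfer this deterministic statement to a high-probability one about the real post-measurement states $\rho(t,q)$. Measurement is CPTP, so applying the experiment to both $\sum_t P_T(t)\kb{t}\otimes\kb{\psi}_{AER}$ and $\sum_t P_T(t)\kb{t}\otimes\kb{\phi^t}_{AER}$, and then tracing out $R$, keeps the two resulting classical--quantum states within trace distance $\epsilon$. Standard cq-state manipulations then bound the total variation between the real and ideal outcome distributions over $(t,q)$ by $\epsilon$ and give $\mathbb{E}_{(t,q)\sim P}\left[\trd{\rho(t,q)-\phi^t(q)}\right]\le 2\epsilon$; a Markov-type argument, together with discarding the $O(\epsilon)$-probable outcomes that fall outside the ideal support, shows that except with probability $2\epsilon^{1-2\beta}$ the observed $(t,q)$ satisfies the displayed ideal bound and $\rho(t,q)$ is within trace distance $O(\epsilon^\beta)$ of $\phi^t(q)$. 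On that event, robustness of smooth min entropy under trace distance --- with the smoothing radius tracked to be at most $4\epsilon+2\epsilon^\beta$ --- gives $\Hmin^{4\epsilon+2\epsilon^\beta}(Z|E)_{\rho(t,q)}\ge\Hmin(Z|E)_{\phi^t(q)}\ge(N-|t|)\gamma-\log_2|J_{q}^{(N-|t|)}|$, which is the claimed inequality.

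I expect the only genuinely delicate part to be this last step: one must carefully juggle the mismatch between the real and ideal outcome distributions and the exact accounting of the smoothing parameters so that the final constants come out to $4\epsilon+2\epsilon^\beta$ and probability $1-2\epsilon^{1-2\beta}$ (this is where $0<\beta<1/2$ is used). That bookkeeping is essentially identical to \cite{krawec2019quantum,krawec2020new}; everything upstream --- the purification, the single invocation of Theorem \ref{thm:sample}, Lemma \ref{lemma:superposition}, and the exact min-entropy computation for the ideal states --- is routine and, crucially, uses nothing about $\samp$ beyond the definition of $J_q$, which is exactly what makes the relation ``plug-and-play''.
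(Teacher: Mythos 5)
Your proposal is correct and follows essentially the same two-step argument as the paper: invoke Theorem \ref{thm:sample} to get ideal states in $span(\mathcal{G}_{t,\delta})$, use Lemma \ref{lemma:superposition} plus Equation \ref{eq:mixed} to compute the exact min-entropy bound $(N-|t|)\gamma - \log_2|J_q^{(N-|t|)}|$ for the ideal post-measurement states, and then transfer to the real state via a concentration bound on the random variable $\Delta_{q,t} = \frac{1}{2}\trd{\rho(t,q)-\sigma(t,q)}$. The only cosmetic differences are that you purify at the outset rather than at the end, and you invoke a Markov-type tail bound where the paper uses Chebyshev with the variance estimate $V^2 \le 2\epsilon$ (either yields the stated failure probability $2\epsilon^{1-2\beta}$).
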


\begin{proof}

The proof follows the same two-step argument we developed in \cite{krawec2019quantum,krawec2020new}. In fact, most of the proof is identical with the exception of a few generalizations; we provide the proof here at a high-level only for completeness, referring the reader to \cite{krawec2019quantum,krawec2020new} for complete technical details when needed.

$ $\newline\textbf{First Step - Ideal Analysis:} We begin by considering the case when the input state $\rho_{AE}$ is pure; the mixed case then follows through standard purification techniques.

By applying Theorem \ref{thm:sample} with respect to the given $X$ basis and sampling strategy $\samp$, there exist ideal states $\{\ket{\phi_{AE}^t}\}$ such that for every $t$, the state $\ket{\phi^t_{AE}} \in \text{span}\{\ket{x_i} \st i \in \mathcal{A}_d^N \text{ and } \max_j |f_j(i_t) - g_j(i_t)| \le \delta\}\otimes\mathcal{H}_E$. Note that the target function $g(x) = (g_1(x), \cdots, g_k(x))$ also depends on the sampling strategy. Furthermore, from this application of Theorem \ref{thm:sample}, if we define $\sigma_{TAE} = \sum_t P_T(t)\kb{t}\otimes\kb{\phi_{AE}^t},$ then it holds that:
\begin{equation}
    \trd{\sum_t P_T(t)\kb{t}\otimes \rho_{AE} - \sigma_{TAE}} \le \sqrt{\epsilon^{cl}_\delta(\samp)} = \epsilon.
\end{equation}
Consider the output of running $(t,q,\sigma(t,q))\experiment{\sigma, X}$. Here $q \in \alphabet_{d}^{|t|}$. It is not difficult to see that the resulting state, after tracing out the measured portion, is of the form:
\begin{equation}
 \sigma(t,q) = \sum_{i\in J_{q}^{(N-|t|)}} \alpha_{i} \ket{x_i}\otimes\ket{E_i},
\end{equation}
where $J_q^{(n)} = \{i \in \alphabet_d^n \st \max_j|f_j(i) - g_j(q)|\le \delta\}$ (note that some of the $\alpha_i$'s may be zero).
%In particular, if $r = d-1$, then we are performing a full basis measurement in the $X$ basis and so there is no additional ``uncertainty'' as represented by the summation over $k$; however, if $r < d-1$, then there are certain symbols (in particular, whenever we receive outcome $\Lambda_r$) where we do not have counts for and, so, the state becomes mixed over this additional uncertainty.

Let $n = N-|t|$. %Consider, now, a purification of $\sigma(t,q)$:
%\[
%\ket{\sigma(t,q)} = \sum_k\sqrt{p_k}\ket{k}\sum\alpha_i^{(k)}\ket{x_i, E_i^{(k)}} = \sum_{i\in J_q^{(n)}}\beta_i\ket{x_i}\ket{\widetilde{E}_i},
%\]
%where some of the $\beta_i$'s may be zero.
From Lemma \ref{lemma:superposition}, we have $\Hmin(Z|E)_{\sigma(t,q)} \ge H(Z|E)_\chi - \log|J_q^{(n)}|$, where $\chi$ is the mixed state:
\[
\chi_{AE} = \sum_{i\in J_q^{(n)}}|\alpha_i|^2\kb{x_i}\otimes\kb{E_i}.
\]

It is straight-forward to show that $\Hmin(Z|E)_\chi = n\gamma = (N-|t|)\gamma$. This is done by conditioning on an additional classical system, writing out the probability distribution of the $Z$ basis measurement given $\chi$ and taking advantage of Equation \ref{eq:mixed} (see \cite{krawec2020new} for explicit details on how this computation is done given a mixed state of this form). Thus, \emph{with certainty}, the ideal case, after choosing subset $t$ and observing $q$, will have min entropy no less than $(N-|t|)\gamma - \log|J_q^{(n)}|$.

$ $\newline\textbf{Second Step - Real Case Analysis:} The second step involves arguing that the real state cannot behave too differently from the ideal state we just analyzed. We make use of Chebyshev's inequality while also switching to smooth min entropy to complete the analysis.

Consider the real state $\rho= \frac{1}{T}\sum_t\kb{t}\otimes\rho_{AE}$ where $\rho_{AE}$ is given as input to the theorem (note that, here, the input state is independent of the subset chosen unlike in the ideal case). The process of choosing a subset $t$, measuring, and observing $q$ (resulting in post-measurement state $\rho(t,q)$) may be described, entirely, by the mixed state: \[\rho_{TQR}=\sum_tP_T(t)\kb{t}\otimes\sum_{q\in\alphabet_{d}^{|t|}}p(q|t)\kb{q}\otimes\rho(t,q),\]
where $p(q|t)$ is the probability of observing outcome $q$ given that the subset $t$ was sampled; here we use the ``R'' register to denote the remaining, unmeasured, portion of the state. Likewise, the ideal state, after performing this experiment, may be written as the mixed state: $\sigma_{TQR}=\sum_tP_T(t)\kb{t}\otimes\sum_{q}\tilde{p}(q|t)\kb{q}\otimes\sigma(t,q)$. We define $\Delta_{q,t}=\frac{1}{2}||\rho(t,q)-\sigma(t,q)||$, which may be treated as a random variable over the choice of $t$ and observed $q$. We want to show that, with high probability, $\Delta_{q,t}$ is ``small.''

It is not difficult to show that the expected value of $\Delta_{q,t}$ is ${\mathbb{E}}(\Delta_{q,t})=\mu \leq 2\epsilon$. Furthermore, the variance $V^2$ of this random variable has the property that $V^2 \leq \mu \leq 2\epsilon$ (see our proof in \cite{krawec2019quantum} for both these computations, though they follow immediately from properties of trace distance and the fact that $\Delta_{t,q} \le 1$).

Now, by Chebyshev's inequality, we have:
\begin{equation}
    \Pr(|\Delta_{q,t}-\mu| \geq \epsilon^\beta) \leq \frac{V^2}{\epsilon^{2\beta}} \leq 2\epsilon^{1-2\beta},
\end{equation}
(the last inequality follows since $\beta < \frac{1}{2}$); note that this probability is over all subsets $t$ and measurement outcomes $q$. Thus, except with probability at most $2\epsilon^{1-2\beta}$, after choosing $t$ and observing $q$, it holds that $|\Delta_{q,t}-\mu| \leq \epsilon^{\beta}$ which implies:
$$\frac{1}{2}||\rho(t,q)-\sigma(t,q)||=\Delta_{q,t} \leq \mu+\epsilon^{\beta} \leq 2\epsilon+\epsilon^{\beta}.$$
Thus, we may conclude that $\Hmin^{4\epsilon+2\epsilon^{\beta}}(A_Z|E)_\rho \geq \Hmin(A_Z|E)_\sigma$, completing the second step of the proof.

Of course, the above analysis assumed the input state $\rho_{AE}$ was pure. However, if the state is not pure, it may be purified and, incorporating this extra system to $E$, the result above follows.
\end{proof}

Notice that %one of the strengths of this quantum entropic uncertainty relation is entirely determined by how tightly $|J_q^{(n)}|$ and $\epsilon^{cl}_\delta$ can be evaluated or bounded. Furthermore, 
one may choose sampling strategies suitable to a particular application and, then, need only to analyze the classical strategy to attain a result in the quantum setting. %Also, 
Furthermore, arbitrary sampling strategies may be employed with arbitrary target functions, leading to a potential wide-range of applications. One simply needs to analyze the failure probabilities of the resulting classical sampling strategy (Equation \ref{eq:error-prob}). We demonstrate this by analyzing a QRNG protocol in the next section.

%The bound for $|J_q|$ serves as a more general combinatorial result describing the structure of $\al_r^n$ as a metric space modulo strings with identical relative counts and sup distance between tuples of relative counts. Moreover, this bound divided by $r^n$ may equivalently serve as an upper bound for the CDF of a multinomial distribution under tight conditions, namely a restricted range of possible support values and a strictly uniform distribution of outcomes for each character.

\subsection{Application to Quantum Random Number Generators}\label{section:QRNG}

Quantum Random Number Generators (QRNG) are protocols which, by utilizing a physical source of randomness in particular quantum sources, attempt to distill a uniform random string. For a cryptographic QRNG, the string should be uniform random and also independent of any adversary. At the most basic level, a QRNG protocol could consist of a source emitting a photon passing through a beam splitter connected to two photon counters. Such a system will lead to a random measurement on one detector or the other, producing a random stream of $0$'s and $1$'s. Such a setup assumes fully trusted devices (both the source and measurement apparatus are fully trusted and characterized and outside the control or influence of any adversary). 

On the opposite extreme is the fully device independent model \cite{di-qrng1,di-qrng2} whereby the source and measurement apparatus are not trusted (perhaps manufactured by the adversary - though one must still assume, of course, that the actual measurement outcome reported by the untrusted device cannot be sent to the adversary). Fully device independent protocols are obviously highly desirable from a cryptographic standpoint; however in practice, they are slow to implement \cite{di-qrng-exp,di-qrng-exp-2}. This leads to a middle-ground between these two extremes known as the \emph{source-independent} (SI) model introduced originally in \cite{vallone2014quantum} and studied further in several works including \cite{si-qrng2,si-qrng3}. Here, the quantum source is not trusted, however the measurement devices used are trusted and characterized. Such protocols are a step up from the fully trusted scenario (as they can take into account physical imperfections, but also the fact that an adversary may be entangled with the source and, thus, attempt to gain information on the resulting random string). Furthermore, they are highly practical, leading to Gbps implementations \cite{si-qrng-fast}. Finally, by not trusting the source, several fascinating possibilities are open, including the use of sunlight as the source \cite{si-qrng-sun}. For a general survey of QRNG protocols and their security models, the reader is referred to \cite{qrng-survey}.

In previous work, we showed that sampling-based entropic uncertainty relations provide optimistic results for QRNG protocols. In \cite{krawec2019quantum}, we analyzed a qubit-based protocol but without an adversary. In \cite{krawec2020new}, we analyzed a SI-QRNG protocol with an adversarial source and qudits ($d$-level systems), however where Alice was restricted to performing only a partial basis measurement (our previous relation could not take into account a full basis measurement for the sampling stage of the protocol). Here, we show how our entropic uncertainty relation can be used to provide highly optimistic bit generation rates for the full high-dimensional SI-QRNG protocol introduced in \cite{vallone2014quantum} (where a full basis measurement is required for the test stage). The protocol we analyze requires Alice to be able to measure in two bases $Z = \{\ket{0}, \cdots, \ket{d-1}\}$ and $X = \{\ket{x_0}, \cdots, \ket{x_{d-1}}\}$. We assume the measurement devices are fully characterized and so $\max_{i,j}|\braket{i|x_j}|$ is known. In the following we will assume that $|\braket{i|x_j}| = 1/\sqrt{d}$ for all $i,j$ however our analysis works identically for other scenarios. The protocol, then, operates as follows:
\begin{enumerate}
    \item \textbf{Preparation:} An adversary prepares a quantum state $\ket{\psi_0} \in \mathcal{H}_A\otimes\mathcal{H}_E$, where the $\mathcal{H}_A$ portion is an $(n+m)$-fold tensor of $\mathcal{H}_d$ (i.e., the $A$ register consists of $n+m$ qudits of dimension $d$ for a known $d \ge 2$). The $A$ portion is sent to Alice while the $E$ portion remains with the adversary. An ideal source should prepare the state $\ket{\psi_0} = \ket{x_0}^{\otimes (n+m)}\otimes\ket{\chi}_E$ - that is, a state independent of Eve and with $n+m$ perfect copies of the qudit state $\ket{x_0}$. As the source is adversarial, we do not assume anything about the structure of $\ket{\psi_0}$ other than it lives in $\mathcal{H}_A\otimes\mathcal{H}_E$.
    
    \item \textbf{Sampling and Measurements:} Alice chooses a random subset $t$ of size $m$ and measures those qudits indexed by $t$ in the $X$ basis, recording the outcome as $q \in \mathcal{A}_d^m$. The character counts of this will be used to determine how much information an adversary has (it should be that $c_0(q)$ is high). The remaining qudits she measures in the $Z$ basis, saving the resulting string as $r \in \mathcal{A}_d^n$. Note we are not considering experimental imperfections on the devices such as dark counts or low-efficiency detectors - we are only interested in the theoretical bound of ideal measurements, leaving these interesting practical measurement concerns as potential future work.
    
    \item \textbf{Post-Processing:} Alice runs a privacy amplification protocol, applying a two-universal hash function $f$ to the string $r$, resulting in her final random string $s = f(r)$. As proven in \cite{frauchiger2013true}, for a QRNG protocol of this nature, the hash function $f$ need only be chosen randomly once and then reused, so no additional randomness is needed here.
\end{enumerate}

The sampling portion of this protocol is easily seen to be $\samp_1$ introduced in Section \ref{section:sample-strats} with target function $g(x) = (c_0(x), \cdots, c_{d-1}(x))$. In this case, the size of the chosen subset $t$ is always $m$ leaving $n$ qudits unmeasured. So we write $J_q$ in place of $J_q^{(n)}$ from Theorem \ref{thm:twoparty} and its definition is:
\begin{equation}\label{eq:Jq-qrng}
J_q = \{i \in \alphabet_d^n \st \max_j |c_j(i) - c_j(q)| \le \delta\}.
\end{equation}
To apply the sampling based entropic uncertainty relation of Theorem \ref{thm:twoparty}, we first bound the size of this set. Of course $J_q \subset I_q = \{i \in \alphabet_d^n \st |w(i) - w(q)| \le \delta\}$ where $w(x)$ is the relative Hamming weight of $x$. Then, using the well-known volume of a Hamming ball, we may bound $|J_q| \le |I_q| \le d^{n\Hextd(w(q) + \delta)}$. This is the bound we used in our entropic uncertainty relation in \cite{krawec2020new} (which was based on the set $I_q$ not the full $J_q$ since full measurements were not supported in our earlier work). However, when we have full information on the string $q$, we may attempt to derive a tighter bound on $J_q$ itself for use in analyzing this QRNG protocol. Theorem \ref{Jq bound} provides an alternative bound on $|J_q|$ which is tighter in some scenarios as we discuss later.

%%Furethermore, for a given $\epsilon > 0$, if we set:
%%\[
%%\delta = \sqrt{\frac{(m+n+1)\ln(2d/\epsilon^2}{m(m+n)}},
%%\]
%%then $\sqrt{\epsilon^{cl}_\delta(\samp_1)} = \epsilon$. This is the $\delta$ we will use in our subsequent analysis.

\begin{theorem} \label{Jq bound}
Let $\frac{1}{d} > \delta > 0$ and $q\in\alphabet^m_d$ be given. Define the functions $\nu_i$ for each $i \in \al_d$, dependent on the choice of $q$, to be
\[\nu_i = \begin{cases}
0, & c_i(q) - \delta \leq 0 \\
c_i(q) - \delta, & \text{otherwise}.
\end{cases}\]
then, for $J_q = J_q^{(n)}$ defined in Equation \ref{eq:Jq-qrng}, we have:
\begin{equation}
    \log_2|J_q| \leq - n\sum_{i\in \al_d} \nu_i \log_2 \nu_i + n\log_2 n \left(1-\sum_{i\in \al_d} \nu_i\right) + (d+1)\log_2 e - \frac{d}{2}\log_2\left(\frac{1-d\delta}{d}\right).
\end{equation}
\end{theorem}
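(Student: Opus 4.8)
The plan is to bound $|J_q|$ by a union bound over possible count vectors $c(i)$ of words $i \in J_q$, and then bound the number of words realizing a fixed count vector by a multinomial coefficient, finally estimating that coefficient via Stirling's approximation. First I would observe that every $i \in J_q$ satisfies $|c_j(i) - c_j(q)| \le \delta$ for all $j$, which forces $c_j(i) \ge \nu_j$ for each $j$ (where $\nu_j$ is the truncation defined in the statement); moreover since $\sum_j c_j(i) = 1$, writing $c_j(i) = \nu_j + \mu_j$ with $\mu_j \ge 0$, we have $\sum_j \mu_j = 1 - \sum_j \nu_j$. So the ``excess mass'' $1 - \sum_j \nu_j$ (which is small, roughly $d\delta$, since each $c_j(q)$ exceeds its truncation by at most $\delta$) must be distributed among the $d$ coordinates. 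I would first count the number of distinct achievable count vectors: each is determined by the integer tuple $(nc_0(i), \dots, n c_{d-1}(i))$, and the number of such tuples summing to $n$ with $nc_j(i) \ge n\nu_j$ is at most $\binom{n(1-\sum_j\nu_j) + d - 1}{d-1}$; since the excess $1-\sum_j\nu_j \le d\delta < 1$, this is a polynomial in $n$ of degree $d-1$, contributing only an additive $O(\log n)$ term — but I would need to fold it carefully into the stated bound, likely absorbing it into the $n\log_2 n(1 - \sum_i\nu_i)$ term.

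The main term comes from bounding, for a fixed count vector $c(i) = (c_0, \dots, c_{d-1})$ with $c_j = \nu_j + \mu_j$, the number of words with that count, which is the multinomial coefficient $\binom{n}{nc_0, \dots, nc_{d-1}}$. By Stirling (in the form $\sqrt{2\pi k}(k/e)^k \le k! \le e\sqrt{k}(k/e)^k$, or the standard two-sided bound), $\log_2 \binom{n}{nc_0,\dots,nc_{d-1}} \le -n\sum_j c_j \log_2 c_j + (\text{correction terms})$, where the correction terms are $O(d \log n)$ coming from the $\sqrt{\cdot}$ prefactors — this is where the $(d+1)\log_2 e$ and the $-\frac{d}{2}\log_2\frac{1-d\delta}{d}$ terms will originate, the latter presumably bounding a factor like $\sqrt{\mu_j}$ in denominators by noting $\mu_j \le 1 - \sum_i \nu_i \le d\delta$ so $\frac{1}{d} - \delta \le$ something, i.e., lower-bounding the relevant quantities. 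Then I would upper bound $-n\sum_j c_j \log_2 c_j$ in terms of the $\nu_j$'s. Since $x \mapsto -x\log_2 x$ is concave and $c_j = \nu_j + \mu_j$, I would split $-c_j \log_2 c_j \le -\nu_j \log_2 \nu_j - \mu_j \log_2 c_j$ (using $c_j \ge \nu_j$, but actually more carefully using monotonicity/concavity to pull out the $\nu_j \log_2 \nu_j$ piece) and then bound $-\sum_j \mu_j \log_2 c_j \le (\sum_j \mu_j)\log_2 n = (1 - \sum_i \nu_i)\log_2 n$, using $c_j \ge 1/n$ whenever $\mu_j > 0$ (since a nonzero count is at least $1/n$). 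This yields exactly the shape $-n\sum_i \nu_i \log_2 \nu_i + n\log_2 n(1 - \sum_i \nu_i)$.

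I expect the main obstacle to be the bookkeeping that reconciles the two separate $O(d\log n)$-type contributions — one from enumerating count vectors, one from the Stirling prefactors — against the clean closed form in the statement, which has no explicit $\log n$ term beyond the $n\log_2 n(1-\sum_i\nu_i)$ factor and only the constant-in-$n$ terms $(d+1)\log_2 e$ and $-\frac{d}{2}\log_2\frac{1-d\delta}{d}$. The resolution is presumably that the count-vector enumeration term $\log_2\binom{n(1-\sum\nu_i)+d-1}{d-1} \approx (d-1)\log_2(n(1-\sum\nu_i))$ gets absorbed because $1 - \sum_i \nu_i \ge$ (something controlled by $1-d\delta$), and combined with the Stirling prefactor one checks the constants work out to at most $(d+1)\log_2 e - \frac{d}{2}\log_2\frac{1-d\delta}{d}$; I would verify this by tracking each $\sqrt{\cdot}$ factor explicitly, using $n c_j \le n$ for the numerators and $nc_j \ge n\mu_j$ with $\mu_j$ bounded below on its support, together with $1 - \sum_i\nu_i \ge 1 - d\delta$ wherever a lower bound on the excess mass is needed. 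The inequality $\delta < 1/d$ is exactly what guarantees $1 - d\delta > 0$ so that $\log_2\frac{1-d\delta}{d}$ makes sense and the bound is finite.
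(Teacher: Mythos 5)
Your skeleton (decompose $J_q$ by character-count vectors, bound the resulting multinomial coefficients with Stirling, and convert $-n\sum_i c_i\log_2 c_i$ into $-n\sum_i\nu_i\log_2\nu_i + n\log_2 n\,(1-\sum_i\nu_i)$) matches the paper's, and your treatment of the main term is sound. The gap is exactly the one you flag as your ``main obstacle,'' and it is not mere bookkeeping: your union bound $|J_q|\le(\#\text{ count vectors})\cdot\max_k\binom{n}{k_0,\dots,k_{d-1}}$ introduces a multiplicative factor of order $\binom{\lfloor n(1-\sum_i\nu_i)\rfloor+d-1}{d-1}$, i.e.\ an additive $\Theta(d\log_2 n)$ in the logarithm, whereas the stated bound leaves room only for the $n$-independent remainder $(d+1)\log_2 e-\frac d2\log_2\frac{1-d\delta}{d}$. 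Absorbing a term that grows with $n$ into those constants is impossible, and absorbing it into $n\log_2 n\,(1-\sum_i\nu_i)$ requires exhibiting at least that much uniform slack in your inequality $-\mu_j\log_2 c_j\le\mu_j\log_2 n$; that slack equals $\sum_j n\mu_j\log_2 k_j$ for the maximizing count vector, which is not obviously $\ge d\log_2 n$ in all parameter regimes (e.g.\ when the excess mass sits on coordinates with small $k_j$ and $n\delta$ is only moderately large), and you do not carry out this verification.

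The paper sidesteps the enumeration entirely. Writing $|J_q|=n!\sum_{k\in\mathcal K_q}\prod_i\frac{1}{k_i!}$, it enlarges $\mathcal K_q$ to the product set $\mathcal M_q$ obtained by dropping the constraint $\sum_i k_i=n$, so the sum factorizes as $\prod_i\bigl(\sum_{j}\frac{1}{x_i^{j}!}\bigr)$ over the admissible values $x_i^1<x_i^2<\cdots$ of each coordinate; each factor is then bounded by $\frac{e}{x_i^1!}$ by comparing the tail to the Taylor series of $e$. The entire count-vector enumeration thus collapses into the constant factor $e^d$ (the source of the $(d+1)\log_2 e$), and a single application of Stirling to $n!/\prod_i\lceil n\nu_i\rceil!$, together with Jensen's inequality and $\sum_i\nu_i\ge 1-d\delta$, yields the remaining terms. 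To rescue your route you would either have to prove the slack estimate above rigorously or accept a bound weakened by an additive $O(d\log_2 n)$; adopting the factorization trick is the cleaner fix.
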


\begin{proof}
To prove this, we count the total number of ways one may construct a string with the required counts.
%Now the first step that we can do here is to determine the size of $J_q$ based on how many different ways we can choose each possible number of each character. That is, we want to determine how many strings we can form with, say, 5 zeros, 11 ones, etc., and sum over all possible quantities of each character. We can manipulate the corresponding sum of binomial coefficients in the following way.
Let $\mathcal{K}_q = \left\{(x_0, \dots, x_{d-1}) \in \mathbb{N}^d : |x_i - nc_i(q)| \leq n\delta \text{ and } \sum x_i = n\right\}$ and observe that

\begin{align*}
|J_q| &= \sum_{k \in \mathcal{K}_q} \prod_{k_i \in k} {n - \sum_{j=0}^{i-1} k_j \choose k_i} \\
&= \sum_{k \in \mathcal{K}_q} \frac{n!}{k_0!(n-k_0)!}\cdot\frac{(n-k_0)!}{k_1!(n-k_0-k_1)!}\cdot\frac{(n-k_0-k_1)!}{k_2!(n-k_0-k_1-k_2)!}\dots \\
&= \sum_{k \in \mathcal{K}_q} \frac{n!}{k_0!k_1!k_2!\dots} \;\;=\;\; n!\sum_{k \in \mathcal{K}_q}\prod_{k_i \in k} \frac{1}{k_i!}.
\end{align*}

Let $\mathcal{M}_q = \{(x_0,\dots,x_{d-1})\in\mathbb{N}^d : |x_i - nc_i(q)| \leq n\delta\}$. Of course $\mathcal{K}_q \subset \mathcal{M}_q$. This immediately implies
\[n!\sum_{k \in \mathcal{K}_q}\prod_{k_i \in k} \frac{1}{k_i!} \leq n!\sum_{x \in \mathcal{M}_q}\prod_{x_i \in x} \frac{1}{x_i!}.\]
Now let $\{x_i^1,x_i^2,\dots,x_i^{m_i}\} \subset \N$ be the values in increasing order which satisfy $|x_i^j-nc_i(q)| \leq n\delta$ for all $j \in \{1,\dots,m_i\}$. We can enumerate the set $\mathcal{M}_q$ as
\[\mathcal{M}_q = \{(x_0^{j_0},x_1^{j_1},\dots,x_{d-1}^{j_{d-1}}) \st j_i \in \{1,\dots,m_i\} \;\forall i \in \{0,\dots,d-1\}\}.\]
Then
\begin{align*}
    \sum_{x \in \mathcal{M}_q} \prod_{i=0}^{d-1} \frac{1}{x_i!} &= \sum_{j_0,\dots,j_{d-1}} \left(\frac{1}{x_0^{j_0}!}\cdot \frac{1}{x_1^{j_1}!} \cdot \ldots \cdot \frac{1}{x_{d-1}^{j_{d-1}}!}\right) \\
    &= \prod_{i=0}^{d-1}\left(\frac{1}{x_i^{1}!} + \frac{1}{x_i^{2}!} + \ldots + \frac{1}{x_i^{m_i}!}\right) \;=\; \prod_{i=0}^{d-1}\sum_{j_i=1}^{m_i} \frac{1}{x_i^{j_i}!}
\end{align*}

The benefit of isolating these partial sums of $1/x_i^{j_i}!$ is that we can take advantage of the Taylor series for $e^x$ to bound this partial sum. We can expand on this to get the following:

\begin{align*}
    n!\prod_{i=0}^{d-1}\left(\sum_{j_i=1}^{m_i} \frac{1}{x_i^{j_i}!}\right) \;&=\; n!\prod_{i=0}^{d-1} \frac{1}{x_i^1!} \left(\sum_{j_i=1}^{m_i} \frac{1}{(x_i^{j_i}!)/(x_i^1!)}\right) \;\leq\; n!\prod_{i=0}^{d-1} \frac{1}{x_i^1!} \left(\sum_{j_i=1}^{m_i} \frac{1}{j_i!}\right) \\
    &\leq\; n!\prod_{i=0}^{d-1} \frac{e}{x_i^1!} \;=\; n!\cdot e^d \cdot \prod_{i=0}^{d-1}\frac{1}{x_i^1!}.
\end{align*}

Since each $x_i \ge 0$, we replace the value of $x_i^1$ with the value $n\nu_i$ for each $i$, where $\nu_i$ is defined in the Theorem statement. Furthermore, below, since $0! = 1$, we only need to multiply by those $\nu_i > 0$. Then,

% no need to waste a quarter of a page
\allowdisplaybreaks
\begin{align}
    \log_2 |J_q| &\leq \log_2 \left(n!\cdot e^d \cdot \prod_{i=0}^{d-1}\frac{1}{x_i^1!}\right) \notag \\
    &= \log_2 \left(n!\cdot e^d \cdot \prod_{\nu_i \neq 0} \frac{1}{\lceil n\nu_i \rceil !}\right) \notag \\
    &= \log_2(n!) + d\log_2 e - \sum_{\nu_i\neq 0} \log_2(\lceil n\nu_i \rceil !) \notag \\
    &\leq \log_2\left(en^{n+1/2}e^{-n}\right) + d\log_2 e - \sum_{\nu_i\neq 0} \log_2 \left(\sqrt{2\pi}(n\nu_i)^{n\nu_i+1/2}e^{-n\nu_i}\right) \label{eq:Stirling} \\
    &\leq n\log n + (d+1-n)\log_2 e + \frac{1}{2}\log_2 n - \sum_{\nu_i\neq 0} \left( (n\nu_i+1/2)\log_2 n\nu_i -n\nu_i \log_2 e\right) \notag \\
    &\leq - n\sum_{\nu_i\neq 0} \nu_i \log_2 \nu_i + n\log_2 n \left(1-\sum_{\nu_i\neq 0} \nu_i\right) + (d+1)\log_2 e \notag \\
    &\;\;+ \frac{1}{2}\left(\log_2 n - \sum_{\nu_i\neq 0} \log_2 n\nu_i\right) \label{eq:condense-d+1} \\
    &\leq - n\sum_{i\in \al_d} \nu_i \log_2 \nu_i + n\log_2 n \left(1-\sum_{i \in \al_d} \nu_i\right) + (d+1)\log_2 e \notag \\
    &\;\;+ \frac{1}{2}\left(\log_2 n - d\log_2\left(\frac{n(1-d\delta)}{d}\right)\right) \label{eq:concavity-monotonicity} \\
    &\leq - n\sum_{i\in \al_d} \nu_i \log_2 \nu_i + n\log_2 n \left(1-\sum_{i\in \al_d} \nu_i\right) + (d+1)\log_2 e - \frac{d}{2}\log_2\left(\frac{1-d\delta}{d}\right). \notag
\end{align}
Inequality \ref{eq:Stirling} follows from the Stirling upper and lower bounds. Then, the $(d+1)$ in inequality \ref{eq:condense-d+1} follows from $-n(1-\sum_i \nu_i)\log_2 e \leq 0$. Jensen's Inequality and concavity of the logarithm imply inequality \ref{eq:concavity-monotonicity}.
\end{proof}

Now we use Theorems \ref{thm:twoparty} and \ref{Jq bound} to analyze the protocol described above. Let $\epsilon > 0$ be arbitrarily chosen by the user (this will determine the user's desired failure probability and security properties).  We use
\begin{equation}\label{eq:QRNG-delta}
\delta = \sqrt{\frac{(m+n+2)\ln(2d/\epsilon^2)}{m(m+n)}},
\end{equation}
which, by Lemma \ref{lemma:samp-psi1} implies that the failure probability will be $\epsilon^2$ (and so the $\epsilon$ in Theorem \ref{thm:twoparty} will match the chosen value of $\epsilon$ here). Finally, let
$\epsilon_{PA} = 4\epsilon^\beta + 9\epsilon$ be the distance from an ideal uniform random string of size $\ell$ independent of $E$'s system. 

Using Theorem \ref{thm:twoparty} along with privacy amplification (Equation \ref{eq:PA}), we have that, except with probability at most $2\epsilon^{1-2\beta}$, the number of uniform random bits extracted from the protocol leading to an $\epsilon_{PA}$ secure string is:
%by privacy amplification (Equation \ref{eq:PA})
%\[||\sigma_{KE}-I_K/2^\ell \otimes \sigma_E|| = \epsilon_{PA} \leq 2^{-\frac{1}{2}(\Hmin^{\epsilon'}(A|E)_\rho - \ell)}+2\epsilon'.\]
%ubtracting $2\epsilon'$ and then applying the logarithm base 2 and isolating $\ell$ results in the inequality $\ell \geq \Hmin^{\epsilon'}(A|E)_\rho + 2\log_2 \epsilon$. Now applying Theorem \ref{thm:twoparty} will give us a lower bound on the number of secure bits of information from this protocol, which we will call $\ell_{\text{ours}}$:
\begin{equation}
    \ell_{\text{ours}} = n\log_2 d - \log_2|J_q| - 2\log_2 \frac{1}{\epsilon},
\end{equation}
where
\begin{equation}\label{eq:min-Jq}
    \log_2|J_q| \leq \min\left\{\mathcal{F}, \mathcal{G} \right\},
\end{equation}
\begin{align}
    \mathcal{F} &= - n\sum_{i\in \al_d} \nu_i \log_2 \nu_i + n\log_2 n \left(1-\sum_{i\in \al_d} \nu_i\right) + (d+1)\log_2 e - \frac{d}{2}\log_2\left(\frac{1-d\delta}{d}\right), \\
    \mathcal{G} &= n\Hextd_d(1-\nu_0)\log_2 d\label{eq:Jq-G}
\end{align}
by Theorem \ref{Jq bound} and the standard bound on the volume of a Hamming ball as discussed earlier. In our evaluations, we set $\epsilon = 10^{-36}$ and $\beta = 1/3$ which balances the failure probability of Theorem \ref{thm:twoparty} (namely, the probability of failure is $2\epsilon^{1-2\beta}$) and the smoothing parameter used in the min entropy. With these settings, the failure probability and the value of $\epsilon_{PA}$ are on the order of $10^{-12}$.

We compare our new lower bound $\ell_{\text{ours}}$ for this protocol against the lower bound provided in \cite{vallone2014quantum} using alternative methods and an alternative entropic uncertainty relation. We also compare with another high-dimensional SI-QRNG from \cite{xu2016experimental}. Note that, due to our bound on $J_q$ in Equation \ref{eq:min-Jq}, our new result here will never be worse then the SI-QRNG protocol analyzed in our prior work \cite{krawec2020new} (which used Equation \ref{eq:Jq-G} only) and so we do not compare with that here.

A lower bound for the SI-QRNG protocol of \cite{vallone2014quantum}, which we denote here as $\ell_1$, was given in that reference by:
\[\ell_1 \geq n\left(\log_2 d - 2\log_2\left[\frac{\Gamma(m+d)}{\Gamma(m+d+\frac{1}{2})} \sum_{i=0}^{d-1} \frac{\Gamma(c_i+\frac{3}{2})}{\Gamma(c_i+1)} \right]\right),\]
where $m$ is the test size and $c_i$ represents the number of measurement outcomes that result in outcome $\ket{x_i}$.

The protocol of \cite{xu2016experimental} is slightly different from the one we analyze. Here, an adversarial source prepares an entangled high-dimensional state (if the source were honest, it would prepare $n+m$ copies of the state $\ket{\psi_0} = \frac{1}{\sqrt{d}}\sum_{i=0}^{d-1}\ket{i,i}_{A_1A_2}$) sending the $A_1$ and $A_2$ registers to Alice. Alice chooses a random subset and measures the $A_1$ and $A_2$ qudit systems each in a $d$-dimensional basis $X$ resulting in classical characters $c_{A_1}(i)$ and $c_{A_2}(i)$ corresponding to the $i$th iteration of registers $A_1$ and $A_2$. For the remaining unmeasured systems, she discards the $A_2$ system and measures only the $A_1$ system in the $Z$ basis resulting in her secret string. If the source were honest, it should be that the $X$ basis measurement outcomes of the $A_1$ and $A_2$ register are fully correlated. She then applies privacy amplification to the result of the $Z$ basis measurement. A lower bound for the number of random bits that may be extracted from this protocol, which we denote here as $\ell_2$, was computed in \cite{xu2016experimental}:
\[\ell_2 = n\log_2 d - \log_2 \gamma(d_0 + \delta'),\]
where
\begin{equation}\label{eq:gamma-fct}
    \gamma(x) = (x+\sqrt{1+x^2})\left(\frac{x}{\sqrt{1+x^2}-1}\right)^x
\end{equation}
and
\begin{equation}\label{eq:delta-prime}
    \delta' = d\sqrt{\frac{N^2}{n^2m}\ln\left(\frac{4}{\epsilon}\right)}.
\end{equation}
The term $d_0$ is computed as the average difference between the measurements values of the pairs, $c_{A_1}(i)$ and $c_{A_2}(i)$ for $i$ from 1 to $m$. That is,
$d_0 = \frac{1}{m}\sum_{i=1}^m |c_{A_1}(i)-c_{A_2}(i)|.$

For all protocols, we assume a failure probability on the order of $10^{-12}$. The difference from the ideal random string (Equation \ref{eq:PA}) is also set to be $10^{-12}$. As we are only interested in comparing the relative performance, we do not consider the additional randomness used to choose a random subset of size $m$. Since all protocols in our evaluation are using the same process for this and same sampling sizes (in particular we use $7\%$ of all signals for sampling), they will each lose the same amount from their respective $\ell$ values and so the comparison remains unchanged.

Note that for each of the three bounds, no assumption is needed on the noise in the channel - Alice simply uses the direct measurement result from the test case (in the $X$ basis) and evaluates $\ell$. To compare, however, we will simulate certain noise scenarios. We first compare these protocols assuming a depolarization channel acting on each qudit state independently and identically. Such a channel will cause the qudit to become the completely mixed state with some probability $Q$; otherwise it remains in its original state. In this setting, we see the protocol of \cite{vallone2014quantum}, \emph{but augmented using our new entropic uncertainty relation here} outperforms both $\ell_1$ and $\ell_2$. Since $\ell_1$ is the same protocol we are analyzing with $\ell_{ours}$ this shows the great benefit of sampling-based entropic uncertainty relations. This evaluation is shown in Figures \ref{fig:QRNG_lo-dim_dep}, \ref{fig:QRNG_hi-dim_dep}. Next, we evaluate on asymmetric channels which are more likely to add noise towards one basis vector over another (i.e., it is more likely to change a $\ket{0}$ to a $\ket{1}$ as opposed to changing a $\ket{0}$ to a $\ket{3}$). Depending on the state favored by the channel, our bound generally outperforms prior work as shown in Figures \ref{fig:QRNG_lo-dim_nondep2} (right),  \ref{fig:QRNG_lo-dim_nondep1} and \ref{fig:QRNG_hi-dim_nondep} (right), though there are scenarios where the protocol of \cite{xu2016experimental} can outperform our analysis as shown in Figures \ref{fig:QRNG_lo-dim_nondep2} (left) and \ref{fig:QRNG_hi-dim_nondep} (left). Note, however, that the protocol of \cite{xu2016experimental} is a different protocol; our methods applied to that protocol may provide a boost in performance in this scenario also, a question we leave as future work. Comparing $\ell_{ours}$ with $\ell_1$, which is the generation rate for the same protocol of \cite{vallone2014quantum}, shows that our new entropic uncertainty relation always leads to more optimistic bit generation rates in every scenario we simulated. Also, note that in all cases (including in the case highlighted in Figures \ref{fig:QRNG_lo-dim_nondep2} and \ref{fig:QRNG_hi-dim_nondep}), if we take the number of signals to be high enough, our bound outperforms.

\begin{figure}
    \centering
    \includegraphics[width=.46\textwidth]{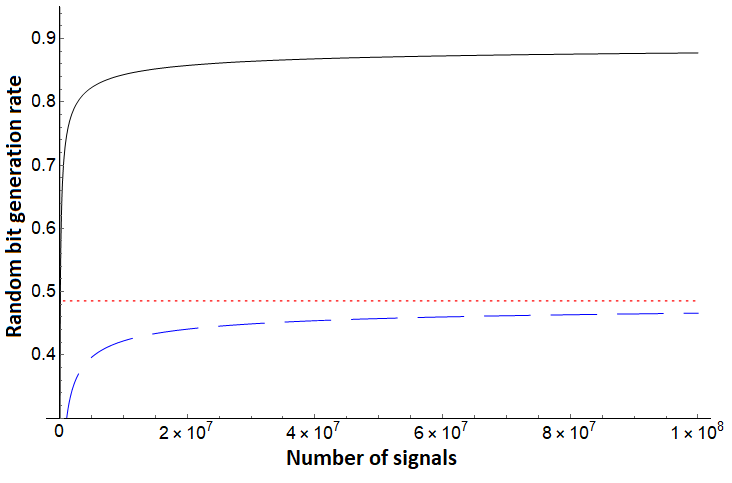}
    \includegraphics[width=.46\textwidth]{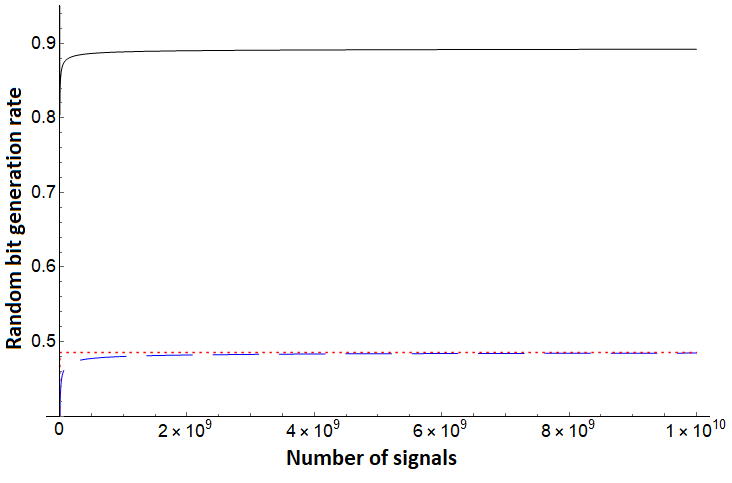}
    \caption{{\footnotesize Random bit generation rates. $x$-axis: Total number of signals $N$ from which $.07N$ are used for sampling; $y$-axis: Random bit generation rate $\ell/N$. Solid black: $\ell_{\text{ours}}/N$; Dotted red: $\ell_1/N$ from~\cite{vallone2014quantum} (same protocol, different security analysis); Dashed blue: $\ell_2/N$ from~\cite{xu2016experimental} (different protocol and different security analysis method). Both graphs plot $d=4$ with $c(q) = (0.8,1/15,1/15,1/15)$ (recall that $c(q)$ denotes the $d$-tuple of character counts as discussed in Section \ref{section:notation}). The left and right graphs plot $N \leq 10^8$ and $N \leq 10^{10}$ respectively.}}
    \label{fig:QRNG_lo-dim_dep}
\end{figure}

\begin{figure}
    \centering
    \includegraphics[width=.48\textwidth]{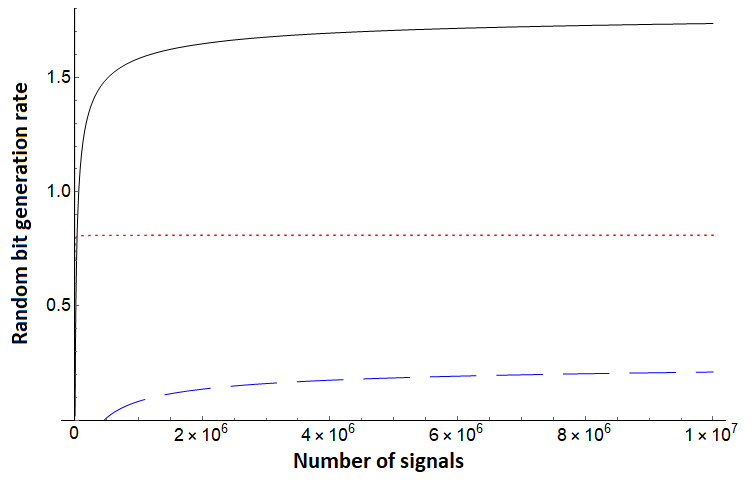}
    \includegraphics[width=.48\textwidth]{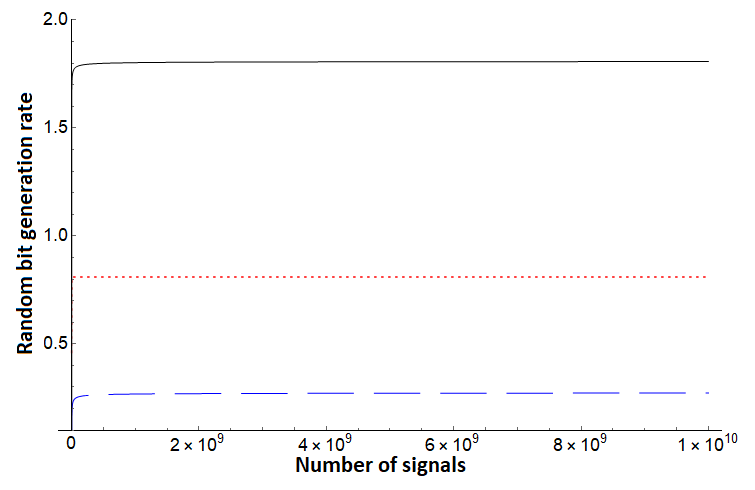}
    \caption{{\footnotesize Random bit generation rates. $x$-axis: Total number of signals $N$; $y$-axis: Random bit generation rate $\ell/N$. Solid black: $\ell_{\text{ours}}/N$; Dotted red: $\ell_1/N$ from~\cite{vallone2014quantum}; Dashed blue: $\ell_2/N$ from~\cite{xu2016experimental}. Both graphs plot $d=16$ with $c(q) = (0.7,0.02,0.02,\dots,0.02)$. The left and right graphs plot $N \leq 10^8$ and $N \leq 10^{10}$ respectively.}}
    \label{fig:QRNG_hi-dim_dep}
\end{figure}

\begin{figure}
    \centering
    \includegraphics[width=.48\textwidth]{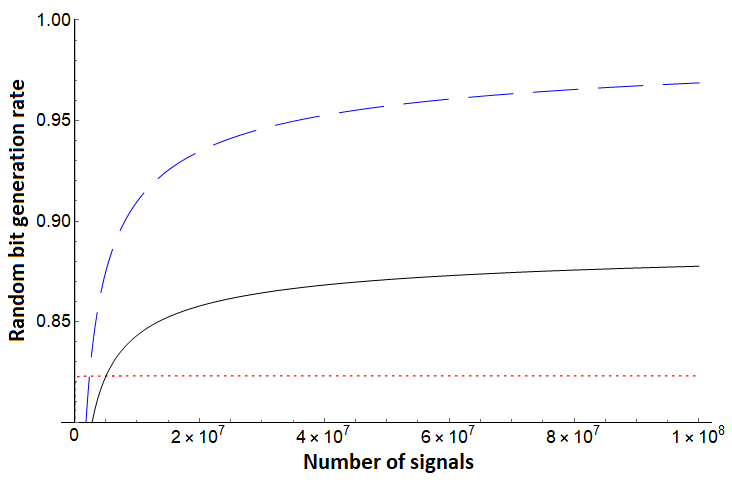}
    \includegraphics[width=.48\textwidth]{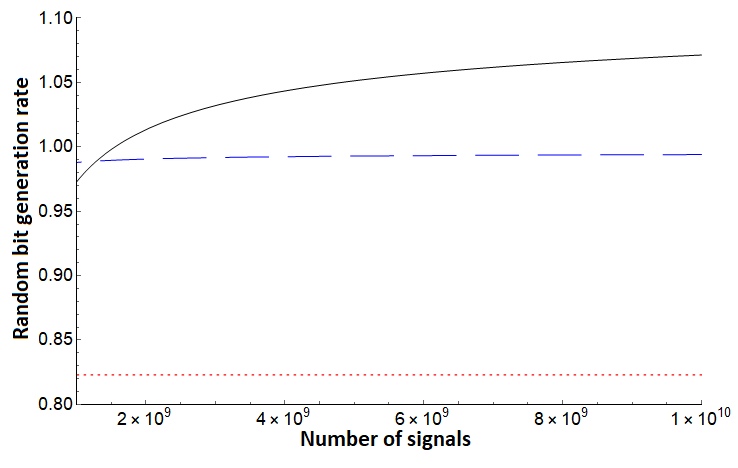}
    \caption{{\footnotesize Random bit generation rates. $x$-axis: Total number of signals $N$; $y$-axis: Random bit generation rate $\ell/N$. Solid black: $\ell_{\text{ours}}/N$; Dotted red: $\ell_1/N$ from~\cite{vallone2014quantum}; Dashed blue: $\ell_2/N$ from~\cite{xu2016experimental}. Both graphs plot $d=4$ with $c(q) = (0.8,0.19,0.005,0.005)$. The left and right graphs plot $N \leq 10^8$ and $10^9 \leq N \leq 10^{10}$ respectively.}}
    \label{fig:QRNG_lo-dim_nondep2}
\end{figure}

\begin{figure}
    \centering
    \includegraphics[width=.48\textwidth]{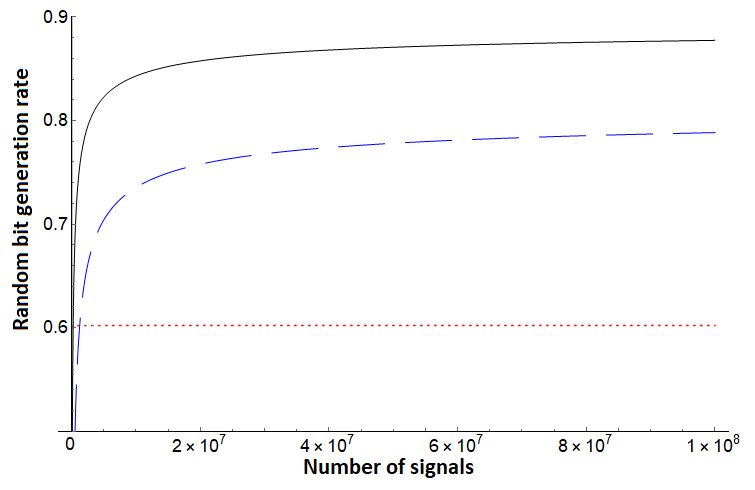}
    \includegraphics[width=.48\textwidth]{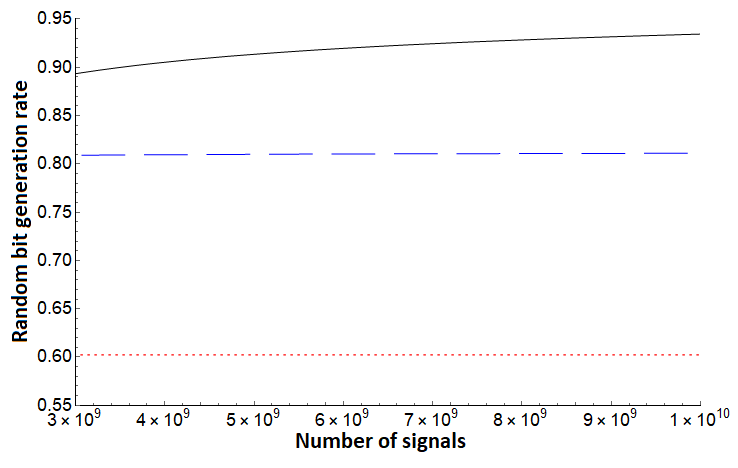}
    \caption{{\footnotesize Random bit generation rates. $x$-axis: Total number of signals $N$; $y$-axis: Random bit generation rate $\ell/N$. Solid black: $\ell_{\text{ours}}/N$; Dotted red: $\ell_1/N$ from~\cite{vallone2014quantum}; Dashed blue: $\ell_2/N$ from~\cite{xu2016experimental}. Both graphs plot $d=4$ with $c(q) = (0.8,0.15,0.025,0.025)$. The left and right graphs plot $N \leq 10^8$ and $3\times 10^9 \leq N \leq 10^{10}$ respectively.}}
    \label{fig:QRNG_lo-dim_nondep1}
\end{figure}

%The protocol in \cite{xu2016experimental} computing $\ell_3$ has some slight differences. Concisely, the protocol proceeds as follows. An adversarial source sends entangled pairs of qudits to Alice. If the pair of qudits belongs to a test iteration, then Alice measures both qudits under the $X$ basis, resulting in classical characters $c_A(i)$ and $c_B(i)$ corresponding to the $i$th iteration of registers $A$ and $B$. Otherwise, Alice measures the first qudit under the $Z$ basis.

\begin{figure}
    \centering
    \includegraphics[width=.48\textwidth]{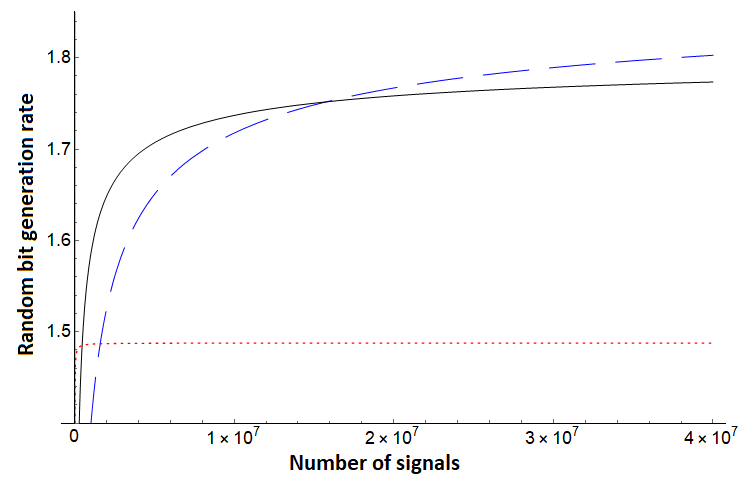}
    \includegraphics[width=.48\textwidth]{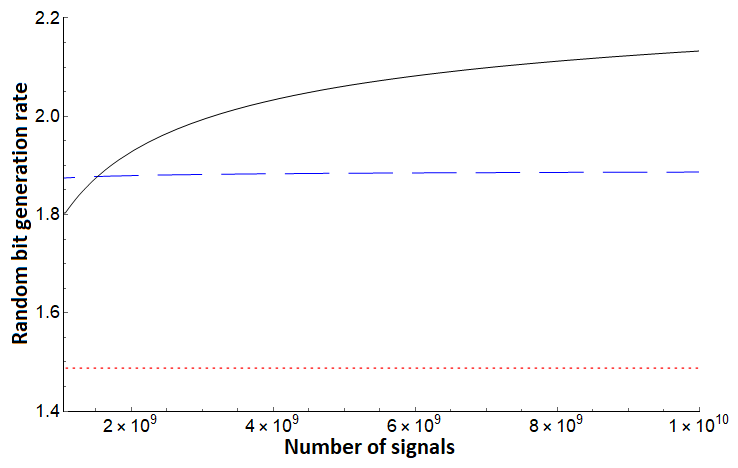}
    \caption{{\footnotesize Random bit generation rates. $x$-axis: Total number of signals $N$; $y$-axis: Random bit generation rate $\ell/N$. Solid black: $\ell_{\text{ours}}/N$; Dotted red: $\ell_1/N$ from~\cite{vallone2014quantum}; Dashed blue: $\ell_2/N$ from~\cite{xu2016experimental}. Both graphs plot $d=16$ with $c(q) = (0.7,0.16,0.075,0.035,0.0025,0.0025,\dots,0.0025)$. The left and right graphs plot $N \leq 4\times 10^7$ and $1.05 \times 10^9 \leq N \leq 10^{10}$ respectively.}}
    \label{fig:QRNG_hi-dim_nondep}
\end{figure}

%In general, the protocol from~\cite{vallone2014quantum} along with our entropic uncertainty relation works well for large enough \textit{or} small enough numbers of signals, but may not perform best around the range $10^8 \leq N \leq 10^9$ in some cases. We fix $\beta = 1/3$ to simplify simulation and comparison of our protocol with other protocols. Our simulations consider different noise models to highlight the potential benefit of our new protocol. %In Figures \ref{}, we observe that over certain measurement simulations, our protocol theoretically yields higher bit generation rates for large quantities of signals. Figures \ref{} show that our protocol does not always yield the highest bit generation rate, but is still close to as strong as the other protocols in the long run. In particular, our value of $\ell_{\text{ours}}$ will never perform best on a depolarized channel (uniform spread of noise). However, our new entropic uncertainty relation can produce considerably stronger generation rates than those from \cite{krawec2020new}, \cite{vallone2014quantum} and \cite{xu2016experimental} when the quantum channel either exhibits a large amount of noise or has a large enough dimension.

In summary, Figures \ref{fig:QRNG_lo-dim_dep} and \ref{fig:QRNG_hi-dim_dep} highlight how $\ell_{\text{ours}}$ consistently outperforms $\ell_1$~\cite{vallone2014quantum} and $\ell_2$~\cite{xu2016experimental} on a depolarization channel for different dimensions $d$. Our bound for $\ell_{\text{ours}}$ still performs very well on systems far from depolarization, as shown in Figure \ref{fig:QRNG_lo-dim_nondep1}. However, there can exist quantum channels which lead to $\ell_{\text{ours}}$ producing a lower random bit generation rate than $\ell_2$ for certain $N$. Even in these cases, Figures \ref{fig:QRNG_lo-dim_nondep2} and \ref{fig:QRNG_hi-dim_nondep} highlight that, assuming sufficient computational power to process larger blocks in the post-processing stage of the protocol, $\ell_{\text{ours}}$ can produce a much higher random bit generation rate than $\ell_1$ and $\ell_2$ on a large block of signals.

% Note that Theorem \ref{Jq bound} is only beneficial in the computation of $\ell_{\text{ours}}$ when $N$ is \textit{large enough}. Otherwise, the standard bound on the size of a Hamming ball dominates. However, asymptotically it is possible to show that Theorem \ref{Jq bound} is always at least as strong as $n\Hextd_d(1-\nu_0)\log_2 d$.

\subsection{Asymptotic Behavior and Analysis}

%Now, we will analyze the ratio of the crude bound for $\log_2 |J_q|$ from~\cite{krawec2020new} to our bound for $\log_2 |J_q|$. As suggested in Figure \ref{fig:Jq-bound} and Section \ref{section:QRNG}, the bound on $\log_2|J_q|$ provided in Theorem \ref{Jq bound} is asymptotically stronger than the bound of $n\Hextd_d(1-\nu_0)\log_2 d$ arising from the standard upper bound on the size of a Hamming ball~\cite{krawec2020new}. First, we shall prove an interesting statement about the relation between the $d$-ary entropy function $H_d$ and the Shannon entropy, which will be useful to analyze the asymptotic behavior of the ratio.

In Equation \ref{eq:min-Jq}, we take the bound for $\log_2|J_q|$ to be the minimum of Theorem \ref{Jq bound} and the size of a Hamming ball. The reason for doing so is that while the bound on the size of a Hamming ball is tighter for some scenarios, the bound from Theorem \ref{Jq bound} is significantly better in others, especially, as our numerical simulations show, for large numbers of signals. In this section, we analyze and compare the asymptotic behavior of both bounds.  We will also use this work to show an alternative proof of the famous Maassen-Uffink relation \cite{maassen1988generalized} for high dimensional systems.

First, we prove a technical lemma about the relation between the $d$-ary entropy function $H_d$ and the Shannon entropy, which will be needed to analyze the asymptotic behavior.

%\begin{figure}
    %\centering
    %\includegraphics[width=.48\textwidth]{logJq_small.png}
    %\includegraphics[width=.48\textwidth]{logJq_large.png}
    %\caption{Comparison of bounds for $\log_2|J_q|$. $x$-axis: Total number of signals $N=n+m$; $y$-axis: Size of bound for $\log_2|J_q|$. Solid black: Theorem \ref{Jq bound}; Dashed blue: bound of a Hamming sphere (\ref{eq:Jq-G}). Bound parameters: $d=16$, $c(q) = (0.7, 0.16, 0.075, 0.035, 0.0025, 0.0025, \dots, 0.0025)$, $\epsilon = 10^{-36}$. Left plot: $N \leq 10^7$; Right plot: $N \leq 10^{100}$.  We use $7\%$ of signals for $m$ and set $\delta$ as in Equation \ref{eq:QRNG-delta}.  Note that lower is better for applying to Theorem \ref{thm:twoparty}.}
 %   \label{fig:Jq-bound}
%\end{figure}

\begin{lemma} \label{d-ary and Shannon relation}
Let $X$ be a discrete random variable with $d$ possible outcomes $\{x_0, \dots, x_{d-1}\}$ such that the probability of observing outcome $x_j$ is $p_j$ for each $j$. Then for any $i$, it holds that
\[H_d(1-p_i) \geq \log_d 2 \cdot H(X)\]
where equality holds if and only if $p_j = p_k$ for all $j,k \neq i$ (i.e., if the distribution is uniform on the other outcomes not equal to $i$).
\end{lemma}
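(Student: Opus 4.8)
The plan is to reduce the claimed inequality to the elementary fact that a probability distribution on $d-1$ symbols has base-$d$ entropy at most $\log_d(d-1)$, with equality exactly at the uniform distribution.

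First I would put both sides in base $d$. Converting the Shannon entropy, $\log_d 2\cdot H(X) = -\sum_{j=0}^{d-1} p_j\log_d p_j$. Expanding $H_d(1-p_i)$ through its definition gives
\[
H_d(1-p_i) = (1-p_i)\log_d(d-1) - (1-p_i)\log_d(1-p_i) - p_i\log_d p_i .
\]
The term $-p_i\log_d p_i$ occurs on both sides and cancels, so the target inequality is equivalent to
\[
(1-p_i)\log_d(d-1) - (1-p_i)\log_d(1-p_i) \ \geq\ -\sum_{j\neq i} p_j\log_d p_j .
\]

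Next I would normalize the remaining weights. Set $S = 1-p_i = \sum_{j\neq i}p_j$; if $S=0$ both sides vanish and there is nothing to prove, so assume $S>0$ and let $r_j = p_j/S$ for $j\neq i$, a probability distribution on the $d-1$ indices $j\neq i$. Writing $p_j = S r_j$ and $\log_d p_j = \log_d S + \log_d r_j$, the right-hand side becomes $-S\log_d S - S\sum_{j\neq i} r_j\log_d r_j$, while the left-hand side is $S\log_d(d-1) - S\log_d S$. Cancelling $-S\log_d S$ and dividing by $S>0$ reduces everything to
\[
\log_d(d-1) \ \geq\ -\sum_{j\neq i} r_j\log_d r_j .
\]

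This last inequality is exactly the maximum-entropy bound: the base-$d$ Shannon entropy of a distribution on $d-1$ outcomes is at most $\log_d(d-1)$, which follows from concavity of the logarithm (Jensen) or equivalently from non-negativity of the relative entropy against the uniform distribution, with equality if and only if $r$ is uniform — that is, $p_j = p_k$ for all $j,k\neq i$. Tracing this equality condition back up the (purely algebraic, reversible) chain of reductions yields the stated characterization. I do not expect a genuine obstacle here; the only points requiring care are handling the degenerate case $S=0$ and keeping the base-$2$ versus base-$d$ conversions consistent throughout.
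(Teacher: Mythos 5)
Your proof is correct and is essentially the same argument as the paper's: both reduce the claim to the fact that the entropy of the conditional distribution on the $d-1$ outcomes $j\neq i$ is maximized (at $\log_d(d-1)$) exactly by the uniform distribution. The paper packages this by comparing $X$ to an auxiliary variable $Y$ with $q_j=(1-p_i)/(d-1)$ for $j\neq i$, while you cancel and normalize directly, but the key step and the equality characterization are identical.
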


\begin{proof}
Fix $i \in \{0,\dots,d-1\}$ and let $Y$ be the random variable where the probability of observing $x_i$ is $q_i=p_i$ and the probability of observing $x_j$ for any $j \neq i$ is $q_j = \frac{1-p_i}{d-1}$. Then
\begin{align*}
    H_d(1-p_i) &= (1-p_i)\log_d(d-1) - (1-p_i)\log_d(1-p_i) - p_i\log_d(p_i) \\
    &= \log_d 2 \left[-p_i\log_2(p_i) - \sum_{j \neq i} \frac{1-p_i}{d-1} \log_2\left(\frac{1-p_i}{d-1}\right)\right] \\
    &= \log_d 2 \left[-p_i\log_2(p_i) - \sum_{j \neq i} q_j \log_2\left(q_j\right)\right] \;=\; H(Y) \cdot \log_d 2.
\end{align*}
Moreover, observe that
\begin{align*}
    H(X) \;&=\; -p_i\log_2 (p_i) - \sum_{j \neq i}p_j \log_2(p_j) \\
    &\leq\; -p_i\log_2 (p_i) - \sum_{j \neq i}q_j \log_2(q_j) \;=\; H(Y).
\end{align*}
Note the inequality is shown by recalling that Shannon entropy is maximal if and only if given a uniform distribution (which also proves equality if the distribution is uniform on outcomes other than $x_i$).
%Now if $p_j = p_k$ for all $j,k \neq i$, then we have that $X = Y$ by definition of $Y$, and hence $H(X) = H(Y)$. If $p_j \neq p_k$ for some $j,k \neq i$, then strict inequality $H(X) < H(Y)$ holds since $-\log_2$ is strictly concave, implying $- \sum_{j \neq i}p_j \log_2(p_j)$ is maximized if and only if $p_j = q_j$ for all $j \neq i$.
%from defining a new auxiliary random variable $X'$ on outcomes $\textbf{x}\setminus \{x_i\}$ with probability $\frac{p_j}{1-p_i}$ of observing outcome $x_j$, $j \neq i$. The Shannon entropy of $X'$ is maximized when $p_j = q_j$ for all $n \neq i$.
\end{proof}

We now show that our bound for $\log_2|J_q|/n$ converges to the Shannon entropy of the random variable induced by a measurement on some i.i.d. system. This will then lead us to an alternative proof of the Maassen-Uffink relation from \cite{maassen1988generalized}.

\begin{lemma} \label{Shannon convergence}
Let $\rho$ be a quantum state acting on $\mathcal{H}_d$ and consider the $n$-fold tensor state $\rho' = \rho^{\otimes n}$. Furthermore, let $\delta = O\left(\frac{1}{\sqrt{n}}\right)$. Consider measuring all $n$ qudits of the state $\rho'$ in some $d$-dimensional orthonormal basis $X$ resulting in some $q\in\al_d^n$ and from this define the set $J_q = \{i \in \al_d^n : \max_j|c_j(i) - c_j(q)| \leq \delta\}$ as before. Then it follows that
\[\lim_{n\to\infty} \frac{\log_2 |J_q|}{n} \leq H(X)_\rho.\]
\end{lemma}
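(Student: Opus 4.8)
The plan is to combine the explicit combinatorial bound of Theorem~\ref{Jq bound} with the law of large numbers. Write $p_j = \braket{x_j|\rho|x_j}$ for $j \in \al_d$, so that measuring a single copy of $\rho$ in the $X$ basis yields outcome $x_j$ with probability $p_j$ and $H(X)_\rho = -\sum_j p_j\log_2 p_j$. Since $\rho' = \rho^{\otimes n}$, measuring all $n$ qudits in the $X$ basis produces $n$ i.i.d.\ samples from $\{p_j\}$, so by the strong law of large numbers the relative counts satisfy $c_j(q)\to p_j$ almost surely as $n\to\infty$. Because $\delta = O(1/\sqrt n) < 1/d$ for $n$ large, Theorem~\ref{Jq bound} applies, and it therefore suffices to show that whenever $c_j(q)\to p_j$ for each $j$ and $\delta\to 0$, the right-hand side of Theorem~\ref{Jq bound} divided by $n$ has limit at most $H(X)_\rho$.

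Concretely, dividing the bound of Theorem~\ref{Jq bound} by $n$ gives
\[
\frac{\log_2|J_q|}{n} \le -\sum_{i\in\al_d}\nu_i\log_2\nu_i \;+\; \log_2 n\Bigl(1-\sum_{i\in\al_d}\nu_i\Bigr) \;+\; \frac{(d+1)\log_2 e}{n} \;-\; \frac{d}{2n}\log_2\!\left(\frac{1-d\delta}{d}\right),
\]
with $\nu_i = \max(c_i(q)-\delta,\,0)$. I would then treat each term as $n\to\infty$: since $\delta\to 0$ and $c_i(q)\to p_i$ we have $\nu_i\to p_i$, and as $x\mapsto -x\log_2 x$ is continuous on $[0,1]$ (with the convention $0\log_2 0 = 0$) the first term converges to $-\sum_i p_i\log_2 p_i = H(X)_\rho$. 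For the second term, $\sum_i c_i(q) = 1$ and $\nu_i\ge c_i(q)-\delta$ give $0\le 1-\sum_i\nu_i\le d\delta = O(d/\sqrt n)$, so $\log_2 n\,(1-\sum_i\nu_i) = O(d\log_2 n/\sqrt n)\to 0$. The last two terms are $O(d/n)$ and $O(d\log_2 d/n)$ respectively (the argument of the final logarithm tending to $1/d$), hence both vanish. Collecting these bounds yields $\limsup_{n\to\infty}\log_2|J_q|/n\le H(X)_\rho$, as claimed.

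The only genuine subtlety is the interpretation of the limit, since $q$, and therefore $J_q$, is random and depends on $n$; this is exactly what the product structure $\rho' = \rho^{\otimes n}$ provides, letting us invoke the strong law to pin the empirical distribution $c(q)$ to the true distribution $\{p_j\}$. If one prefers a finite-$n$ statement, Hoeffding's inequality $\Pr(|c_j(q)-p_j|>t)\le 2e^{-2nt^2}$ lets us run the same estimate on the high-probability event that all $|c_j(q)-p_j|$ are at most, say, $n^{-1/4}$. I would also note in passing that the inequality is in fact an equality, since $J_q$ contains every string whose count vector is exactly $c(q)$ and $\frac1n\log_2\binom{n}{nc_0(q),\dots,nc_{d-1}(q)}\to H(X)_\rho$ by Stirling; but only the stated inequality is needed for the subsequent Maassen--Uffink application.
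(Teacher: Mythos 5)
Your proposal is correct and follows essentially the same route as the paper's proof: apply Theorem~\ref{Jq bound}, divide by $n$, control the $\log_2 n\,(1-\sum_i\nu_i)$ term via $\sum_i\nu_i\ge 1-d\delta$ with $\delta=O(1/\sqrt n)$, and use the law of large numbers to send $\nu_i\to p_i$ so the leading term converges to $H(X)_\rho$. Your added remarks on the randomness of $q$ and on the bound being asymptotically tight are correct refinements the paper leaves implicit.
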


\begin{proof} Define $\nu_i = \max\{c_i(q) - \delta,0\}$ and let
\[\mathcal{F}(q, n, d, \delta) = - n\sum_{i\in \al_d} \nu_i \log_2 \nu_i + n\log_2 n \left(1-\sum_{i\in \al_d} \nu_i\right) + (d+1)\log_2 e - \frac{d}{2}\log_2\left(\frac{1-d\delta}{d}\right).\]
%By Theorem \ref{Jq bound}, note that $\log_2|J_q| \leq \mathcal{F}(q, n, d, \delta)$. Then, we know $\lim_{n\to\infty} \frac{\log_2|J_q|}{n} \leq \lim_{n\to\infty} \frac{\mathcal{F}(q, n, d, \delta)}{n}$, and hence it suffices to show $\lim_{n\to\infty} \frac{\mathcal{F}(q, n, d, \delta)}{n} \leq H(X)_\rho$.
Observe that $\frac{\mathcal{F}(q,n,d,\delta)}{n} \leq -\sum_{i} \nu_i \log_2(\nu_i) + d\delta\log_2 n + \frac{O(1)}{n}$ where we used the fact that $\sum_i \nu_i \geq 1-d\delta$. Since $\delta = O\left(\frac{1}{\sqrt{n}}\right)$ we have
\[\frac{\mathcal{F}(q,n,d,\delta)}{n} = -\sum_{i\in \al_d}\nu_i \log_2(\nu_i) + O\left(\frac{\log_2 n}{\sqrt{n}}\right) + O\left(\frac{1}{n}\right).\]

\noindent Then, $\nu_i = \max\{c_i(q) - \delta,0\} \to p_i$ as $n \to \infty$ by the law of large numbers and the assumption on $\delta$, where we use $p_i$ to denote the probability of observing $\ket{x_i}$, the $i$'th basis vector in the measurement basis $X$. Hence,
\[\sum_{i \in \al_d} \nu_i \log_2 (\nu_i) \to \sum_{i \in \al_d} p_i \log_2 (p_i).\]
Finally, by Theorem \ref{Jq bound}, we have $\log_2|J_q| \leq \mathcal{F}(q, n, d, \delta)$, and so we conclude
\begin{align*}
    \lim_{n\to\infty} \frac{\log_2|J_q|}{n}\leq \lim_{n \to \infty} \frac{\mathcal{F}(q,n,d,\delta)}{n} &\leq \lim_{n\to \infty} \left(-\sum_{\nu_i \neq 0}\nu_i \log_2(\nu_i) + O\left(\frac{\log_2n}{\sqrt{n}}\right) + O\left(\frac{1}{n}\right)\right) \\
    %&= \lim_{n\to\infty} \left(-\sum_{i \in \al_d} \nu_i \log_2 (\nu_i)\right) + \lim_{n \to \infty} O\left(\frac{\log_2(n)}{\sqrt{n}}\right) + \lim_{n \to \infty} O\left(\frac{1}{n}\right) \\
    &= -\sum_{i \in \al_d} p_i \log_2(p_i) \;\;=\;\; H(X)_\rho
\end{align*}
\end{proof}
%\fi

Now we are ready to show that our bound for $\log_2 |J_q|$ grows at most as quickly as the volume of a Hamming ball (used in our earlier work in \cite{krawec2020new}). How much slower our bound grows asymptotically depends on the observed relative counts from the sampled $q$.

\begin{theorem} \label{log old bound vs log new bound}
Let $\nu_i = \max\{c_i(q) - \delta,0\}$ for any $i \in \al_d$. Let $\mathcal{G}(q, n, d, \delta) = \frac{n\Hextd_d(1-\nu_a)}{\log_d 2}$, where $a$ is the element of $\al_d$ such that $c_a(q) = \max_{i \in \al_d} c_i(q)$, and
\[\mathcal{F}(q, n, d, \delta) = - n\sum_{i\in \al_d} \nu_i \log_2 \nu_i + n\log_2 n \left(1-\sum_{i\in \al_d} \nu_i\right) + (d+1)\log_2 e - \frac{d}{2}\log_2\left(\frac{1-d\delta}{d}\right).\]
 Let $\delta$ depend on $n$ and $\delta = O\left(\frac{1}{\sqrt{n}}\right)$ asymptotically. Then for arbitrary quantum state $\rho$ acting on $\mathcal{H}_d$, we have that
\[\lim_{n \to \infty} \frac{\mathcal{F}(q, n, d, \delta)}{\mathcal{G}(q, n, d, \delta)} \leq 1\]
where equality holds if and only if $p_i = \frac{1-p_a}{d-1}$ for all $i \neq a$ given $p_k$ is the probability of observing outcome $k$ in basis $X$ (measuring $\rho$).
%Moreover, equality holds if and only if $p_i$, the probability of observing outcome $i$, is equal to $p_j$ for \emph{all} $i, j \neq a$.
\end{theorem}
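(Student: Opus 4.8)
The plan is to compute both limits separately and then take their ratio. By Lemma~\ref{Shannon convergence} and its proof, I already know that $\mathcal{F}(q,n,d,\delta)/n \to -\sum_i p_i\log_2 p_i = H(X)_\rho$ as $n\to\infty$, since $\delta = O(1/\sqrt{n})$ forces the vanishing correction terms $O(\log_2 n/\sqrt n)$ and $O(1/n)$ to disappear and forces $\nu_i \to p_i$ by the law of large numbers. So the numerator, divided by $n$, converges to the Shannon entropy of the $X$-basis measurement distribution.

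Next I would handle $\mathcal{G}(q,n,d,\delta)/n = \Hextd_d(1-\nu_a)/\log_d 2$. Here $a$ is the index achieving the maximal relative count, so for large $n$ we have $c_a(q)\to p_a$ and $\nu_a = \max\{c_a(q)-\delta,0\}\to p_a$; assuming the generic case $0 < p_a < 1$ (and noting $1-\nu_a \le 1-1/d$ eventually, so the extended entropy agrees with $H_d$), continuity of $H_d$ gives $\mathcal{G}(q,n,d,\delta)/n \to H_d(1-p_a)/\log_d 2$. Therefore
\[
\lim_{n\to\infty}\frac{\mathcal{F}(q,n,d,\delta)}{\mathcal{G}(q,n,d,\delta)}
= \lim_{n\to\infty}\frac{\mathcal{F}/n}{\mathcal{G}/n}
= \frac{H(X)_\rho}{H_d(1-p_a)/\log_d 2}
= \frac{\log_d 2\cdot H(X)_\rho}{H_d(1-p_a)}.
\]
Now Lemma~\ref{d-ary and Shannon relation}, applied with $i = a$ (a valid choice since $a$ indexes one of the outcomes), states precisely that $H_d(1-p_a)\ge \log_d 2\cdot H(X)_\rho$, with equality if and only if $p_j = p_k$ for all $j,k\ne a$, i.e.\ $p_j = (1-p_a)/(d-1)$ for all $j\ne a$. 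Hence the ratio above is $\le 1$, with the stated equality condition.

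The main obstacle is really just bookkeeping at the boundary: one must verify that the degenerate cases do not cause trouble. If $H_d(1-p_a) = 0$ (which happens when $p_a = 1$, i.e.\ a deterministic measurement) the ratio is a $0/0$ form and needs separate handling — but then $H(X)_\rho = 0$ too and $\mathcal{F}/n\to 0$, so one should phrase the conclusion so that it remains meaningful (or simply restrict to $p_a<1$, which is the interesting case since high-dimensional cryptographic applications never have a deterministic $X$-measurement). Likewise one should note that $\nu_a \to p_a$ with $1-\nu_a$ staying in the domain $[0,1-1/d]$ of $H_d$ for all large $n$ once $p_a \ge 1/d$, which holds since $p_a$ is the \emph{maximal} probability among $d$ outcomes and hence $p_a \ge 1/d$; this justifies replacing $\Hextd_d$ by $H_d$ in the limit. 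Apart from these edge-case checks, the argument is a direct combination of the two preceding lemmas and the asymptotic estimate already established in the proof of Lemma~\ref{Shannon convergence}.
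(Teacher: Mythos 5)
Your argument is correct and follows essentially the same route as the paper's own proof: both use Lemma~\ref{Shannon convergence} to get $\mathcal{F}/n \to H(X)_\rho$, identify the limit of $\mathcal{G}/n$ via $\nu_a \to p_a$, and close with Lemma~\ref{d-ary and Shannon relation} applied at $i=a$ to obtain the bound and the equality condition. Your extra attention to the edge cases (the $p_a=1$ degeneracy and the justification that $1-\nu_a$ eventually lies in the domain where $\Hextd_d = H_d$ because $p_a \ge 1/d$) is a welcome refinement that the paper glosses over, but it does not change the substance of the argument.
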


\begin{proof} Notice that by the proof of Lemma \ref{Shannon convergence}
\begin{align*}
    \lim_{n\to\infty} \frac{\mathcal{F}(q, n, d, \delta)}{\mathcal{G}(q, n, d, \delta)} &= \lim_{n\to\infty} \frac{\log_d 2}{\Hextd_d(1-\nu_a)} \cdot \lim_{n\to\infty} \frac{\mathcal{F}(q, n, d, \delta)}{n} \\
    &\leq H(X)_\rho \cdot \lim_{n\to\infty} \frac{\log_d 2}{\Hextd_d(1-\nu_a)}.
\end{align*}
Then, by Lemma \ref{d-ary and Shannon relation} and the definition of $\Hextd_d$, it follows that
\[\Hextd_d(1-\nu_a) \geq H_d(1-\nu_a) \geq \log_d 2 \cdot H(X)_\rho\]
and hence
\[H(X)_\rho \cdot \lim_{n\to\infty} \frac{\log_d 2}{\Hextd_d(1-\nu_a)} = \lim_{n\to\infty} \frac{\log_d 2 \cdot H(X)_\rho}{\Hextd_d(1-\nu_a)} \leq 1.\]
where equality holds if and only if $\log_d 2\cdot H(X) = \lim_{n\to\infty} \Hextd_d(1-\nu_a)$. This is true if and only if $p_i = \frac{1-p_a}{d-1}$ by Lemma \ref{d-ary and Shannon relation} and the fact that $\nu_a \to p_a$ as $n \to \infty$.
\end{proof}

%Furthermore, it is not difficult to show that there exist cases where the above inequality is strict. (EDIT NEEDS MORE MAKE THE ABOVE THEOREM STRICT INEQUAL)

\subsubsection{Alternative Proof of Maassen-Uffink Relation}

With the above analysis, our Theorems \ref{thm:twoparty} and \ref{Jq bound} can be used to provide an alternative proof of the Maassen and Uffink entropic uncertainty relation for projective basis measurements of $d$-dimensional states. Note that in \cite{krawec2019quantum} we showed quantum sampling can be used to provide an alternative proof of this relation but only for the qubit case. Furthermore, our earlier work in \cite{krawec2020new} also cannot lead to an alternative proof of this relation in the high dimensional ($d\ge 3$) case as Theorem \ref{log old bound vs log new bound} shows.% The clear benefit of our new bound in this case is its utilization of \emph{all} relative counts not too close to zero. %It turns out that some manipulation of this bound results asymptotically in the Shannon entropy.
%To summarize the power of our new bound, the asymptotic behavior of the bound is dominated by the product $\prod_{\nu_i \neq 0} (1-\nu_i)^{n(1-\nu_i)}2^{nH_2(\nu_i)}$. Taking the logarithm of this expression and dividing by $n$ results in a summation that converges to the Shannon entropy of the measurement basis $M$.

\begin{corollary} \label{Maassen and Uffink}
Let $Z = \{\ket{z_i}\}_{i \in \al_d}$ and $X = \{\ket{x_i}\}_{i \in \al_d}$ be two orthonormal bases and let $\rho$ be a density operator acting on $\mathcal{H}_d$. Then, except with arbitrarily small probability, it holds that
\[H(Z)_\rho + H(X)_\rho \geq \gamma,\]
where $\gamma = -\log\max_{i,j}|\braket{z_i|x_j}|^2$.
\end{corollary}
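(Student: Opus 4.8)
The plan is to feed many independent copies of $\rho$ into the sampling-based relation of Theorem~\ref{thm:twoparty} with the one-party strategy $\samp_1$, let the block length $n$ grow, and read off the two resulting asymptotic rates: Lemma~\ref{Shannon convergence} drives $\frac{1}{n}\log_2|J_q|$ down to $H(X)_\rho$, while the asymptotic equipartition property drives the per-copy smooth min entropy of the $Z$-measurement of $\rho^{\otimes n}$ up to $H(Z)_\rho$, and $\gamma$ is precisely the gap that Theorem~\ref{thm:twoparty} leaves between these two quantities.

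Concretely I would fix a small $\epsilon>0$ and $\beta=1/3$, and for each $n$ put $m=\lfloor n/2\rfloor$ and choose $\delta=\delta_n$ as in Equation~\ref{eq:QRNG-delta}, so that by Lemma~\ref{lemma:samp-psi1} the classical failure probability of $\samp_1$ is exactly $\epsilon^2$ and, since $m=\Theta(n)$, $\delta_n=O(1/\sqrt{n})$; write $\epsilon'=4\epsilon+2\epsilon^\beta$ for the resulting min-entropy smoothing parameter. Apply Theorem~\ref{thm:twoparty} with input state $\rho^{\otimes(n+m)}$ (trivial $E$ register; the theorem accommodates an arbitrary, possibly mixed, $\rho$), bases $Z,X$, and strategy $\samp_1$. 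Because $\rho^{\otimes(n+m)}$ is a product state, measuring the $m$ test qudits in the $X$ basis leaves the $n$ unmeasured qudits in the state $\rho^{\otimes n}$ independently of the recorded outcome $q$, so $\rho(t,q)=\rho^{\otimes n}$ and $\Hmin^{\epsilon'}(Z)_{\rho(t,q)}$ is deterministic. Theorem~\ref{thm:twoparty} then yields, with probability at least $1-2\epsilon^{1-2\beta}$ over the experiment,
\[
\frac{1}{n}\Hmin^{\epsilon'}(Z)_{\rho^{\otimes n}}\;\ge\;\gamma-\frac{1}{n}\log_2|J_q|\;\ge\;\gamma-\frac{1}{n}\mathcal{F}(q,n,d,\delta_n),
\]
the last inequality being Theorem~\ref{Jq bound}.

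I would then take $n\to\infty$ along outcomes $q=q_n$ that lie in the high-probability event of Theorem~\ref{thm:twoparty} and also satisfy $c_i(q_n)\to p_i:=\braket{x_i|\rho|x_i}$; such $q_n$ exist for every large $n$ because this last (``typical'') event has probability $1-o(1)$ and so overlaps the good event once $\epsilon$ is small enough that $2\epsilon^{1-2\beta}<1$. Along this sequence the computation in the proof of Lemma~\ref{Shannon convergence} gives $\frac{1}{n}\mathcal{F}(q_n,n,d,\delta_n)\to-\sum_i p_i\log_2 p_i=H(X)_\rho$ (using $\delta_n=O(1/\sqrt{n})$ and $\nu_i=\max\{c_i(q_n)-\delta_n,0\}\to p_i$), while the asymptotic equipartition property for smooth min entropy gives $\frac{1}{n}\Hmin^{\epsilon'}(Z)_{\rho^{\otimes n}}\to H(Z)_\rho$ for the fixed $\epsilon'\in(0,1)$ (the $Z$-measurement of $\rho^{\otimes n}$ being $n$ i.i.d. copies of the $Z$-distribution of $\rho$). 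Since both sides of the displayed inequality then converge, passing to the limit produces $H(Z)_\rho\ge\gamma-H(X)_\rho$, i.e. $H(Z)_\rho+H(X)_\rho\ge\gamma$. The ``except with arbitrarily small probability'' caveat is only an artifact of invoking Theorem~\ref{thm:twoparty}, whose guarantee carries a failure probability $2\epsilon^{1-2\beta}$ that may be driven to $0$; since the asserted inequality depends on neither $n$ nor the experiment, the limit in fact delivers it outright.

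The step I expect to be the main obstacle is the joint parameter bookkeeping: $\delta_n$ must shrink slowly enough that the classical error $\epsilon_{\delta_n}^{cl}(\samp_1)$ stays small (which needs $m\delta_n^2\to\infty$) yet fast enough that $\delta_n=O(1/\sqrt{n})$ so Lemma~\ref{Shannon convergence} applies --- both hold for $m=\Theta(n)$ with the choice in Equation~\ref{eq:QRNG-delta} --- and the min-entropy smoothing parameter $\epsilon'$ must be held fixed rather than driven to $0$ so that the AEP limit is clean, the residual failure probability then being controlled separately through $\epsilon$. A remark worth including, made precise by Theorem~\ref{log old bound vs log new bound} together with Lemma~\ref{d-ary and Shannon relation}, is why the coarser Hamming-ball bound $\mathcal{G}$ used in \cite{krawec2020new} is not enough for $d\ge 3$: its asymptotic rate is $H_d(1-p_a)/\log_d 2$, which is at least $H(X)_\rho$ and strictly larger unless the $X$-distribution is uniform off its mode, so only the sharper bound $\mathcal{F}$ of Theorem~\ref{Jq bound} recovers Maassen--Uffink in full generality.
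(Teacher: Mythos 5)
Your proposal is correct and follows essentially the same route as the paper: apply Theorem~\ref{thm:twoparty} with $\samp_1$ to an i.i.d.\ tensor power of $\rho$, note the post-measurement state is deterministically $\rho^{\otimes n}$, send $\frac{1}{n}\log_2|J_q|$ to $H(X)_\rho$ via Lemma~\ref{Shannon convergence} and the smooth min entropy to $H(Z)_\rho$ via the asymptotic equipartition property. The only differences are cosmetic (the paper takes $m=n$ rather than $m=\lfloor n/2\rfloor$, and conditions on a purifying $E$ before invoking $H(Z)\ge H(Z|E)$), and your explicit check that the high-probability event of Theorem~\ref{thm:twoparty} intersects the typical event for $q_n$ is a point the paper leaves implicit.
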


\begin{proof}
Consider the state $\rho' = \rho^{\otimes 2n}$. We apply Theorem \ref{thm:twoparty} to $\rho'$ using sampling strategy $\samp_1$ with $m = n$. Since $\rho'$ is i.i.d., for any subset $t$ of size $n$ and any measurement outcome $q$ on that subset, the post-measurement state is simply $\rho^{\otimes n}$.

Fix $\widehat{\epsilon} > 0$ and $0 < \beta < 1/2$. Then, for any $n$ and $\epsilon \leq \widehat{\epsilon}$, setting $\delta = \sqrt{\frac{(n+1)\ln(2d/\epsilon^2)}{n^2}}$, Theorem \ref{thm:twoparty} implies that, except with probability at most $\widehat{\epsilon}^{1-2\beta}$, the inequality
\[\frac{1}{n}\Hmin^{4\epsilon + 2\epsilon^\beta}(Z|E)_{\rho(t,q)} + \frac{1}{n}\log_2 |J_q^{(n)}| \geq \gamma\]
holds, where $q$ is the observed value after measuring using $Z$. Then, by the asymptotic equipartition property, it follows that
\[
\lim_{\epsilon \to 0} \lim_{n \to \infty} \frac{1}{n}\Hmin^{4\epsilon + 2\epsilon^\beta}(Z|E)_{\rho^{\otimes n}} = H(Z|E)_\rho.
\]
This, combined with Lemma \ref{Shannon convergence} and the fact that $H(Z) \ge H(Z|E)$, completes the proof.
%From this deduction, it follows that
%\[H(N)_\rho + \lim_{n\to \infty} \frac{\log_2 |J_q|}{n} \geq -\log_2 c.\]

%\noindent Then by Lemma \ref{Shannon convergence},
%\[\lim_{n \to \infty} \frac{\log_2 |J_q|}{n} \leq H(M)_\rho\]
%\noindent and the desired result follows.

\end{proof}

%Note that by Theorem \ref{log old bound vs log new bound}, it is impossible to prove the Maassen and Uffink uncertainty relation from \cite{krawec2020new} alone.

%Note that it is impossible to prove the Maassen and Uffink entropic uncertainty relation using the uncertainty relation from~\cite{krawec2020new} and our new relation is required to prove the relation via sampling. From the proof of Theorem \ref{log old bound vs log new bound}, it actually holds that
%\[\lim_{n\to\infty} \frac{\mathcal{F}(q, n, d, \delta)}{\mathcal{G}(q, n, d, \delta)} \leq \lim_{n\to\infty}\frac{\log_d 2 \cdot H(X)_\rho}{\Hextd_d(1-\nu_a)}.\]
%Now suppose $\rho = \frac{1}{\sqrt{2}}\ket{0}+\frac{1}{\sqrt{2}}\ket{1}$ is an element of $\mathcal{H}_3$. Then $\lim_{n\to\infty} \log_3 2 \cdot H(X)_\rho = \log_3 2$ and $\lim_{n\to\infty} \Hextd_3(1-\nu_a) = \Hextd_3(1-1/2) = 3\log_3 2 - \log_3 2 = 2\log_3 2$. Therefore, there exists a choice of $\rho$ for which $\lim_{n\to\infty} \frac{\mathcal{F}(q, n, d, \delta)}{\mathcal{G}(q, n, d, \delta)} \leq 1/2 < 1$, and strict inequality implies it is impossible to prove the Maassen and Uffink entropic uncertainty relation from~\cite{krawec2020new}.

\section{A Three-Party Sampling-Based Entropic Uncertainty Relation}

We now turn our attention to deriving a new three-party sampling-based entropic uncertainty relation involving Alice, Bob, and Eve. Later we show an application to a finite key analysis of the high-dimensional BB84 \cite{HD-BB84}. To begin, consider the following experiment, extending an earlier version to this three party case: on an input state of the form $\rho_{TABE} = \sum_{t_A,t_B}p(t_A,t_B)\kb{t_A,t_B}\otimes\rho_{ABE}^{t_A,t_B}$, choose a random subset $t = (t_A,t_B)$ by measuring the $T$ register, causing the state to collapse to $\rho_{ABE}^{t_A,t_B}$ (though, as before, this $\rho$ portion may be independent of the chosen subset in which case a random subset is chosen which does not affect the rest of the input state). We assume $|t_A| = |t_B| = m$. Next, a portion of the $A$ and $B$ registers, indexed by the chosen subsets, are measured in basis $X = \{\ket{x_0}, \cdots, \ket{x_{d-1}}\}$ resulting in outcome $q_A, q_B \in \alphabet_d^m$. This measurement causes the remaining state to collapse to $\rho_{ABE}(t, q_A,q_B)$. The experiment outputs $(t, q_A,q_B, \rho_{ABE}(t,q_A,q_B))\experiment{\rho_{TABE}, X}$.

Note that technically, by considering Alice and Bob as one party for the sampling portion, one could potentially use Theorem \ref{thm:twoparty} with a suitable sampling strategy similar to $\samp_0$ or $\samp_1$. However, this would bound the resulting min entropy as a function of the set $J_{q_A,q_B} = \{(i,j)\in\mathcal{A}_d^{2n} \st |\hd(q_A,q_B) - \hd(i,j)| \le \delta\}$. It is not difficult to see that $|J_{q_A,q_B}| \ge d^n$ (since for any fixed $q_A$ and $q_B$, and for every $i \in \mathcal{A}_d^n$, one may find a $j\in\mathcal{A}_d^n$ satisfying $(i,j) \in J_{q_A,q_B}$). Recalling from our Theorem that the min entropy is higher when the size of this set is smaller. This would always produce the trivial bound of $\Hmin(A|E) \ge 0$ and so Theorem \ref{thm:twoparty} cannot be used for the three-party case. We prove that sampling can provide an entropic uncertainty relation in this scenario for high-dimensional states by suitably modifying the first step of our proof method.  Furthermore, we show how our proof method can lead to relations incorporating more than one overlap, useful in case the two bases have a shared vector in common (e.g., a ``vaccuum'' state vector for QKD).

\begin{theorem} \label{thm:three-party}
Let $\epsilon > 0$, $0 < \beta < 1/2$, and $\rho_{ABE}$ be an arbitrary quantum state acting on $\mathcal{H}_A\otimes\mathcal{H}_B\otimes\mathcal{H}_E$, where $\mathcal{H}_A\cong \mathcal{H}_B \cong \mathcal{H}_d^{\otimes (n+m)}$ with $d \ge 2$ and $m \le n$. Let $Z$ and $X$ be two orthonormal bases of $\mathcal{H}_{d}$ and define the maximal overlap $\hat{\gamma}$ as $\hat{\gamma}=-\log_2\max_{a,b}|\braket{z_a|x_b}|^2$.  Let $a^*$ and $b^*$ be a pair that attains this maximum, then we define the second-greatest overlap as:
\[
\gamma = -\log_2\max_{\substack{a\ne a^*\\b\ne b^*}}|\braket{z_a|x_b}|^2.
\]
(It is possible that $\gamma = \hat{\gamma}$ for some bases.) Let
\[
\delta = \sqrt{\frac{(m+n+2)\ln(4/\epsilon^2)}{m(m+n)}}.
\]
Finally, let $\rho_{TABE} = \frac{1}{T}\sum_t\kb{t}\otimes\rho_{ABE}$, where the sum is over all subsets of the form $t = (t_A,t_B)$ with $t_A = t_B$ (over their respective subspaces) and $T = {n+m \choose m}$. Then, except with probability at most $2\epsilon^{1-2\beta}$, after running $(t, q_A,q_B, \rho_{ABE}(t,q_A,q_B))\experiment{\rho_{TABE}, X}$, it holds that:
\[
    \Hmin^{4\epsilon+2\epsilon^\beta}(A_Z|E)_{\rho(t,q_A,q_B)} + \frac{n\Hextd( \hd(q_A,q_B) + \delta)}{\log_d2} \ge n(c_{b^*}(q_A) + \delta)\hat{\gamma} + n(1-c_{b^*}(q_A) - \delta)\gamma
\]
where $A_Z$ above, denotes the random variable resulting from measuring the remainder of the $A$ system of $\rho(t,q_A,q_B)$ in the $Z$ basis and the probability is over all choices of subsets and measurement outcomes within the experiment.  If $\hat{\gamma} = \gamma$, then the above simplifies to:
\[
    \Hmin^{4\epsilon+2\epsilon^\beta}(A_Z|E)_{\rho(t,q_A,q_B)} + \frac{n\Hextd( \hd(q_A,q_B) + \delta)}{\log_d2} \ge n\gamma
\]
\end{theorem}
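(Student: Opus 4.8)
The plan is to follow the same two-step template used for Theorem~\ref{thm:twoparty}, replacing only the first (``ideal'') step with an argument tailored to the two-register structure; the second (``real'') step is carried over essentially verbatim. The underlying classical strategy is $\samp_{2+0}$ from Section~\ref{section:sample-strats}, whose error probability for the stated $\delta$ is exactly $\epsilon^2$ by Equation~\ref{eq:samp-psi20}, so that $\epsilon=\sqrt{\epsilon^{cl}_\delta}$; hence Theorem~\ref{thm:sample}, applied to the joint $AB$ register measured in the $X$ basis (restricting the subset distribution to pairs $t=(t_A,t_B)$ with $t_A=t_B$), produces ideal states $\{\ket{\phi^t}\}$ that are $\epsilon$-close on average to $\frac1T\sum_t\kb{t}\otimes\rho_{ABE}$. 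As usual I would first assume $\rho_{ABE}$ is pure, purifying into $E$ at the end.

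For the ideal step, after measuring the sampled positions of $A$ and $B$ in $X$ and observing $q_A,q_B$, the unmeasured part of the ideal state is a superposition $\sum_{(i,j)\in J}\alpha_{ij}\ket{x_i}_A\ket{x_j}_B\ket{E_{ij}}_E$, where $J$ is the set of $(i,j)\in\alphabet_d^n\times\alphabet_d^n$ with $|\hd(i,j)-\hd(q_A,q_B)|\le\delta$ \emph{and} $|c_{b^*}(i)-c_{b^*}(q_A)|\le\delta$. Applying Lemma~\ref{lemma:superposition} directly fails because this index set is too large (as noted before Theorem~\ref{thm:three-party}). The fix is to first trace out $B$; since the $\ket{x_j}$ are orthonormal this automatically decoheres $B$ in the $X$ basis, so $\sigma_{AE}=\sum_j p_j\kb{\psi_j}_{AE}$ with each $\ket{\psi_j}$ a superposition over exactly $J_j:=\{i:(i,j)\in J\}$. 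For fixed $j$ this set lies inside a Hamming ball of relative radius $\hd(q_A,q_B)+\delta$ around $j$, so by the standard $d$-ary volume bound $\log_2|J_j|\le n\Hextd_d(\hd(q_A,q_B)+\delta)/\log_d 2$; Lemma~\ref{lemma:superposition} then gives $\Hmin(A_Z|E)_{\psi_j}\ge \Hmin(A_Z|E)_{\chi_j}-n\Hextd_d(\hd(q_A,q_B)+\delta)/\log_d 2$ for the associated mixture $\chi_j$.

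It remains to lower bound $\Hmin(A_Z|E)_{\chi_j}$, and this is where the two overlaps enter. Introduce a classical register recording the $X$-label $i$ and apply Equation~\ref{eq:mixed}: each component is a product state $\kb{x_i}_A\otimes\kb{E_{ij}}_E$, for which measuring $A$ in $Z$ gives $\Hmin(A_Z|E) = -\log_2\max_k\prod_\ell|\braket{z_{k_\ell}|x_{i_\ell}}|^2 = \sum_\ell\bigl(-\log_2\max_a|\braket{z_a|x_{i_\ell}}|^2\bigr)$. Coordinates with $i_\ell=b^*$ contribute exactly $\hat\gamma$ (since $(a^*,b^*)$ attains the global maximum), while coordinates with $i_\ell\ne b^*$ contribute at least $\gamma$; because every $i\in J_j$ obeys $c_{b^*}(i)\le c_{b^*}(q_A)+\delta$ there are at most $n(c_{b^*}(q_A)+\delta)$ coordinates of the first type, and since $\gamma\ge\hat\gamma$ the minimum over $i\in J_j$ of this sum is exactly $n(c_{b^*}(q_A)+\delta)\hat\gamma+n(1-c_{b^*}(q_A)-\delta)\gamma$. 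Averaging back over $j$ via Equation~\ref{eq:mixed} shows the claimed inequality holds with certainty for the ideal state.

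The second step is identical to that of Theorem~\ref{thm:twoparty}: writing $\Delta_{t,q_A,q_B}=\frac12\|\rho(t,q_A,q_B)-\sigma(t,q_A,q_B)\|$, its expectation and variance are both at most $2\epsilon$, so Chebyshev's inequality with smoothing exponent $\beta$ bounds the failure probability by $2\epsilon^{1-2\beta}$ and promotes the bound to the smooth min entropy $\Hmin^{4\epsilon+2\epsilon^\beta}$ of the real post-measurement state; purification disposes of the mixed case, and the $\hat\gamma=\gamma$ specialization is immediate. The main obstacle is the first step — in particular recognizing that one must discard $B$ before invoking Lemma~\ref{lemma:superposition}, and then carrying out the per-coordinate min-entropy accounting that uses the character-count constraint built into $J$ to split the bound between the maximal and second-maximal overlaps; the remainder is bookkeeping inherited from the two-party proof.
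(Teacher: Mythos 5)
Your proposal is correct and follows essentially the same route as the paper's proof: apply Theorem \ref{thm:sample} with $\samp_{2+0}$, trace out $B$ to decohere the post-measurement ideal state into a mixture indexed by $j$, bound each component via Lemma \ref{lemma:superposition} with the Hamming-ball volume $d^{n\Hextd_d(\hd(q_A,q_B)+\delta)}$, use the $c_{b^*}$ count constraint to split the per-coordinate overlap accounting between $\hat{\gamma}$ and $\gamma$, and finish with the identical Chebyshev/smoothing step from Theorem \ref{thm:twoparty}. No substantive differences from the paper's argument.
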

\begin{proof}
As with our other proofs of sampling based entropic uncertainty relations, this one follows the same two-step structure where, first, we analyze the ideal case, proving the result there; then, finally, we argue that the real case must follow the ideal except with small probability of failure. For this three party version, only the first step changes from our proof of Theorem \ref{thm:twoparty}, the second step is identical.

Consider sampling strategy $\samp_{2+0}$ defined in Section \ref{section:sample-strats} with the count index set to $b^*$ which is the classical strategy we will employ in this scenario. By Theorem \ref{thm:sample}, there exist ideal states $\{\ket{\phi^t_{ABE}}\}$, indexed over all subsets $t=(t_A,t_B)$, such that $\ket{\phi^t_{ABE}}\in \text{span}(\mathcal{G}_{t,\delta})\otimes\mathcal{H}_E$ and, in this case as we are using $\samp_{2+0}$, the set
\[
\mathcal{G}_{t,\delta} = \{(i,j)\in\alphabet_d^N\times\alphabet_d^N \st |\hd(i_{t_A},j_{t_B}) - \hd(i_{-t_A},j_{-t_B})|\le \delta \text{ and } |c_{b^*}(i_{t_A}) - c_{b^*}(i_{-t_A})|\le \delta\}.
\]
Furthermore, by our choice of $\delta$, and the failure probability of $\samp_{2+0}$ (from Equation \ref{eq:samp-psi20}), we have:
$\frac{1}{2}\trd{\rho_{TABE} - \sigma_{TABE}} \le \epsilon,$ where $\sigma_{TABE}$ is the ideal state defined over all subsets and individual ideal states above (as in Theorem \ref{thm:sample}).
If we consider performing the given experiment on this ideal state, afterwards, we will receive as output the chosen subset $t$, the measurement results $q_A,q_B$, and the post-measurement state $\ket{\phi^t(q_A,q_B)}_{ABE}$ which is guaranteed to be of the form:
\[
\ket{\phi^t(q_A,q_B)} = \sum_{(i,j)\in J_{q_A,q_B}}\alpha_{i,j}\ket{i}_A\ket{j}_B\ket{E_{i,j}},
\]
where the above $\ket{i}_A$ and $\ket{j}_B$ are $X$ basis vectors (i.e., $\ket{x_i}$ and $\ket{x_j}$), and:
\[
J_{q_A,q_B} = \{(i,j)\in\mathcal{A}_d^{2n} \st |\hd(q_A,q_B) - \hd(i,j)| \le \delta \text{ and } |c_{b^*}(q_A) - c_{b^*}(i)|\le\delta\}.
\]

Rearranging terms and permuting the $A$ and $B$ subspaces, we may write the above state as:
\[
\ket{\phi^t(q_A,q_B)} \cong \ket{\widetilde{\phi}^t(q_A,q_B)} = \sum_{j\in Y} \widetilde{\alpha}_j\ket{j}_B\otimes \sum_{i\in J_{q_A,q_B}^{(j)}}\beta_i^{(j)}\ket{i}_A\ket{\widetilde{E}_{i,j}},
\]
where $Y \subset \alphabet_d^n$ and $J_{q_A,q_B}^{(j)}\subset \{i\in\mathcal{A}_d^n \st |\hd(i,j) - \hd(q_A,q_B)|\le\delta \text{ and } |c_{b^*}(q_A) - c_{b^*}(i)|\le\delta\}$.  Note that some of the $\widetilde{\alpha}$ and $\beta$'s may be zero.  Tracing out $B$ leaves us with:
\[
\sigma_{AE} = \sum_{j\in Y}|\widetilde{\alpha}_j|^2P\left[\sum_{i\in J_{q_A,q_B}^{(j)}}\beta_i^{(j)}\ket{i}_A\ket{\widetilde{E}_{i,j}}\right] = \sum_{j\in Y}|\widetilde{\alpha}_j|^2\sigma_{AE}^{(j)},
\]
where $P(z) = zz^*$. At this point, $A$ measures the remaining portion of her register in the $Z$ basis, resulting in $\sum_j\sigma_{A_Z,E}^{(j)}$. By appending a suitable classical system and conditioning on it, we may use Equation \ref{eq:mixed}, to show that
\[
\Hmin(A_Z|E)_\sigma \ge \min_j\Hmin(A_Z|E)_{\sigma^{(j)}}.
\]

Consider a particular $j$ and define $\chi^{(j)}_{AE} = \sum_{i\in J_{(q_A,q_B)}^{(j)}}|\beta_i^{(j)}|^2\kb{i}\otimes\kb{\widetilde{E_{i,j}}}$. From Lemma \ref{lemma:superposition}, we have:
\begin{align*}
\Hmin(A_Z|E)_{\sigma^{(j)}} &\ge \Hmin(A_Z|E)_{\chi^{(j)}} - \log_2|J_{(q_A,q_B)}^{(j)}|
%&\ge n\gamma - \log_2|J_{(q_A,q_B)}^{(j)}|.
\end{align*}
We first bound $\Hmin(A_Z|E)_{\chi^{(j)}}$.  Taking $\chi^{(j)}$ and measuring in the $Z$ basis yields:
\[
\chi^{(j)}_{ZE} = \sum_{i\in J_{(q_A,q_B)}^{(j)}} |\beta_i^{(j)}|^2\left( \sum_{z\in\alphabet_d^n}p(z|i)\kb{z}\right)\otimes\kb{\widetilde{E_{i,j}}}
\]
where
\[
p(z|i) = |\braket{z|x_i}|^2 = \prod_{k=1}^n|\braket{z_k|x_k}|^2
\]
We wish to find an upper bound on $p(z|i)$ for any $z$ and $i$ (within our constraints on $i$) which will be used shortly to bound the min entropy of the system.  Recall, we have two particular overlaps we are considering: one for $\braket{z_{a^*}|x_{b^*}}$ and one for the remaining possible pairs.  It is not difficult to see that $p(z|i)$ is maximized if, whenever $i_k = b^*$ that we have $z_k = a^*$.  This can happen at most $n(c_{b^*}(q_A) + \delta)$ times due to our constraint on $i$ and so the remaining counts (namely $n(1-c_{b^*}(q_A) - \delta)$) will be bounded using $\gamma$.  Thus, we conclude:
\begin{equation}
    %p(z|i) = \prod_{k=1}^n|\braket{z_k|x_k}|^2 \le \hat{\gamma}^{n(c_{b^*}(q_A) + \delta)}\gamma^{n(1-c_{b^*}(q_A)-\delta)}.
    p(z|i) = \prod_{k=1}^n|\braket{z_k|x_k}|^2 \le \left(|\braket{z_{a^*}|x_{b^*}}|^2\right)^{n(c_{b^*}(q_A) + \delta)}\times \left(\max_{\substack{a\ne a^*\\b\ne b^*}}|\braket{z_a|x_b}|^2\right)^{n(1-c_{b^*}(q_A)-\delta)}.
\end{equation}
Finally, we append a classical system spanned by orthonormal basis $\{\ket{i}_I\}$ for all $i\in J_{(q_A,q_B)}^{(j)}$ producing state:
\[\chi^{(j)}_{ZEI} = \sum_i |\beta_i^{(j)}|^2\left(\sum_z p(z|i)\kb{z}\right)\otimes\kb{\widetilde{E_{i,j}}}\otimes\kb{i}_I.\]
Then, using Equation \ref{eq:mixed} and the definition of min entropy, we conclude:
\begin{align*}
\Hmin(A_Z|E)_{\chi^{(j)}} &\ge \Hmin(A_Z|EI)_{\chi^{(j)}}\\
&\ge \min_i(-\log\max_z p(z|i))\\
&\ge n(c_{b^*}(q_A) + \delta)\hat{\gamma} + n(1-c_{b^*}(q_A) - \delta)\gamma.
\end{align*}

Finally, it is clear that:
\begin{align*}
    |J_{q_A,q_B}^{(j)}| &\le |\{i\in\alphabet_d^n\st |\hd(i,j) - \hd(q_A,q_B)|\le\delta\}\\
    &= |\{i\in\alphabet_d^n\st |\hd(i,0) - \hd(q_A,q_B)|\le\delta\}\\
    &\le d^{n\Hextd(\hd(q_A,q_B)+\delta)},
\end{align*}
where the last inequality follows from the well-known bound on the volume of a Hamming sphere. Since the above analysis holds for any $j$, we have therefore computed the resulting min entropy of the ideal case, namely for \emph{any} chosen $t$ and observed $q_A,q_B$, it holds that:
\begin{equation}
    \Hmin(A_Z|E)_\sigma \ge n\left((c_{b^*}(q_A) + \delta)\hat{\gamma} + (1-c_{b^*}(q_A) - \delta)\gamma - \frac{\Hextd(\hd(q_A,q_B)+\delta)}{\log_d 2}\right)
\end{equation}

The second step of the proof involves arguing that the smooth min entropy $\Hmin^{4\epsilon+2\epsilon^\beta}(A_Z|E)_\rho$, for the given input state $\rho_{ABE}$, is bounded by the same quantity with high probability. This can be done in the same way as the second step in Theorem \ref{thm:twoparty}.  Since the trace distance between the real and ideal states, for our chosen $\delta$, is no greater than $\epsilon$, the same error and smoothing bounds apply as in the second step in Theorem \ref{thm:twoparty}. thus completing the proof.

\end{proof}

\subsection{Application to QKD Security}

Entropic uncertainty relations involving three parties, $A$, $B$, and $E$ have numerous applications, especially in quantum cryptography. Here we demonstrate how our bound produces improved finite-key rate bounds for the High-Dimensional BB84 protocol (HD-BB84) introduced in \cite{HD-BB84}. High-dimensional QKD protocols have been shown to exhibit several advantages over qubit based protocols in some scenarios, including in noise tolerance. For a general survey of QKD protocols, the reader is referred to \cite{qkd-survey-new,qkd-survey} while for a survey specific to high-dimensional QKD, the reader is referred to \cite{HD-qkd-survey}.

HD-BB84 involves two orthonormal bases, which we denote $Z = \{\ket{0}, \cdots, \ket{d-1}\}$ and $X = \{\ket{x_0}, \cdots, \ket{x_{d-1}}\}$, each of dimension $d$; we will assume the bases are mutually unbiased and so $|\braket{i|x_j}| = 1/\sqrt{d}$ for all $i,j$.  If we are considering lossy channels, then we will also add a $\ket{vac}$ vector to both these bases.  Alice chooses a random basis and a random state within that basis (though not the $\ket{vac}$ state if it is there), sending it to $B$. $B$, on receipt of a quantum state will measure it in the $Z$ or $X$ basis, choosing randomly. Afterwards, a classical authenticated communication channel is used allowing $A$ and $B$ to inform each other of their basis choices. If they are incompatible, the round is discarded; otherwise, assuming $B$ did not observe $\ket{vac}$, they add $\log d$ bits to their \emph{raw key}. Repeating $N$ times, each $A$ and $B$ has a raw key of size $n$ bits. However, this key is only partially correlated (there may be errors due to natural noise or adversarial interference) and only partially secret. Thus, an Error Correction protocol is run (leaking additional information to the adversary) and, finally, Privacy Amplification (as discussed in Section \ref{section:notation}), resulting in a secret key of size $\ell$ bits. Maximizing $\ell$ is vital to efficient performance of QKD systems and, from Equation \ref{eq:PA}, this involves maximizing our estimate of the min entropy $\Hmin^\epsilon(A|E)$.

To analyze this protocol, we consider an equivalent entanglement based version, parameterized by $Z$, $X$, $n$ and $m$. We also consider an asymmetric version whereby only $Z$ basis measurements contribute to the raw key, while $X$ basis measurements are used only for estimating the error in the channel.
%We assume that the $X$ basis has maximal overlap and set $\ket{x_i} = \mathcal{F}\ket{i}$, where $\mathcal{F}$ is the quantum Fourier transform; thus $\braket{j|x_i} = \frac{1}{\sqrt{d}}$ for all $i,j$.
The entanglement based HD-BB84 runs as follows:
\begin{enumerate}
    \item An adversary prepares a quantum state $\ket{\psi_0} \in \mathcal{H}_A\otimes\mathcal{H}_B\otimes\mathcal{H}_E$, where $\mathcal{H}_A \cong \mathcal{H}_B \cong \mathcal{H}_d^{\otimes n+m}$. The $A$ portion is sent to Alice; the $B$ portion is sent to Bob; while Eve keeps the $E$ portion to herself.
    \item $A$ chooses a random subset $t$ of size $m$ and sends it to $B$; both parties measure their systems indexed by $t$ in the $X$ basis resulting in outcomes $q_A$ and $q_B$ respectively (these are strings in $\alphabet_d^m$). These values are disclosed to one another using the authenticated channel.
    \item $A$ and $B$ measure the remaining portion of their systems in the $Z$ basis resulting in their raw-keys $r_A$ and $r_B$ of size at most $n$ bits each (if there are $\ket{vac}$ observations, those will not contribute to the raw key and so it may be smaller than $n$ in a lossy channel).
    \item $A$ and $B$ run an error correction protocol capable of correcting up to $Q$ errors in their raw keys, leaking $\leakec$ bits to Eve.
    \item Finally, privacy amplification is run on the error corrected raw key resulting in their secret key.
\end{enumerate}

Note that when $d = 2$ this is exactly the BB84 protocol.  Note also that, by increasing the basis dimension to $d+1$, we can add an additional ``vacuum'' state $\ket{vac}$ to both the $Z$ and $X$ basis, such that $\braket{i|vac} = \braket{x_i|vac} = 0$.  In this case the maximal overlap function is $\hat{\gamma} = -\log_2 1 = 0$ and the second maximal overlap function is $\gamma = -\log_2 1/d = \log_2 d$.  (Note that this shows the importance of our relation in being able to handle both cases individually.) Without this vacuum basis state, the dimension will be $d$, and $\hat{\gamma} = \gamma = \log_2 d$.

Using Equation \ref{eq:PA} and results in \cite{renner2008security,tomamichel2012tight}, if $A$ and $B$ wish to have an $\epsilon_{PA}$-secure key, we have:
\[
\ell = \Hmin^{\epsilon'}(A|E) - \leakec - 2\log\frac{1}{\epsilon_{PA} - 2\epsilon'}.
\]
Given $\epsilon > 0$ and using our Theorem \ref{thm:three-party}, setting $\epsilon_{PA} = 4\epsilon^\beta + 9\epsilon$, we have:
\begin{equation}\label{eq:hd-bb84-our-rate}
    \ell_{our-HD-BB84} = n(1-p_{vac}-\delta)\left(\log d - \frac{\Hextd(\hd(q_A,q_B) + \delta)}{\log_{d+1} 2}\right) - \leakec - 2\log\frac{1}{\epsilon}
\end{equation}
where $p_{vac}$ is the number of counts in the observed $q_A$ of the distinguished vacuum basis state (which is shared between both the $Z$ and $X$ basis making $\hat{\gamma} = 0$). In particular, if the privacy amplification function is chosen to produce an output of size $\ell_{ours}$, it is guaranteed, except with probability at most $2\epsilon^{1-2\beta}$, that the secret key will be $\epsilon_{PA}$ secure according to Equation \ref{eq:PA}.  Note that if we are not considering lossy channels, then the key-rate equation becomes simply:
\begin{equation}\label{eq:hd-bb84-our-rate-2}
    \ell_{our-HD-BB84-no-loss} = n\left(\log d - \frac{\Hextd(\hd(q_A,q_B) + \delta)}{\log_{d} 2}\right) - \leakec - 2\log\frac{1}{\epsilon}
\end{equation}

To compare our new key-rate bound with prior work, we compare with results in \cite{HD-BB84-finite} which is, to our knowledge, the current best bound for the HD-BB84 protocol in the finite key setting (with composable security, as is ours).  Note that they used an entropic uncertainty relation from \cite{tomamichel2011uncertainty}, resulting in a key-rate bound of:
\begin{equation}\label{eq:hd-bb84-prior}
    \ell_{prior-HD-BB84} = n[\log_2 d - h(Q+\nu) - (Q+\nu)\log_2 (d-1)],
\end{equation}
where:
\[
\nu = \sqrt{\frac{(n+m)(m+1)\ln(2/\epsilon)}{m^2n}}.
\]
Where, for our evaluations, $Q$ is the error parameter of a depolarization channel.
Note that this prior work could not handle an additional vacuum basis state in each of the $Z$ and $X$ basis (if it were added, the bound from \cite{tomamichel2011uncertainty} would become the trivial one as the overlap function would be $-\log_2 1 = 0$).  So, when we evaluate, we will compare our bounds both without the vacuum basis then later by considering this basis state and loss in the channel.

%where $\gamma(x)$ was defined earlier in Equation \ref{eq:gamma-fct}; $d_0$ is a function of the observed noise, defined in Equation \ref{eq:d0-fct}; and $\delta'$ is defined in \ref{eq:delta-prime}. Note that these equations are similar to the QRNG case as that protocol also used this methodology to bound the min entropy of the resulting system. (Recall the QRNG protocol utilizing these equations involved an entangled state, though there with both systems sent to Alice.)

In practice, the value of $\hd(q_A,q_B)$ or $Q$ is known and observed based on the actual channel used. However, to evaluate and compare our new key-rate bound we will %, as in the QRNG case, 
evaluate assuming a depolarization channel with parameter $Q$ acting on each qudit independently and identically. Such a channel maps a quantum state $\rho$ to:
\[
\mathcal{E}_Q(\rho) = \left(1-\frac{d}{d-1}\cdot Q\right)\rho + \frac{Q}{d-1}I.
\]
Of course, our security proof does not require this depolarization assumption - instead, it is simply a channel we use to evaluate our bound and compare with prior work. It is also one of the most common noise models considered in theoretical QKD security proofs.  For both protocols, we use $\leakec = 1.2H(A|B)$ which, for this depolarization channel, is easily found to be $H(A|B) = Q\log(d-1) + h(Q)$.

\begin{figure}
    \centering
    \includegraphics[width=.46\textwidth]{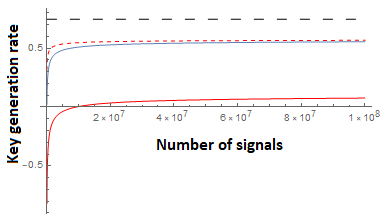}
    \includegraphics[width=.46\textwidth]{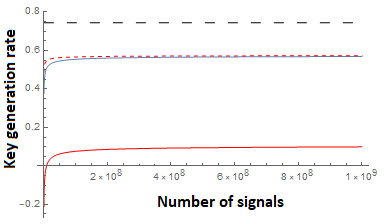}
    \caption{\footnotesize{Showing the secret key generation rates ($\ell/N$) of the HD-BB84 protocol when dimension $d=2^2$ assuming a depolarization channel with parameter $Q=10\%$. Here, the $x$-axis is the total number of qudits $N$ from which we use $m = .07N$ for sampling. Left and Right are different ranges in the number of signals.  Dashed black line (top most in both graphs) is the theoretical asymptotic rate (Equation \ref{eq:hd-bb84-asym}); Solid blue line is our key-rate bound using our new entropic uncertainty relation, namely $\ell_{our-HD-BB84-no-loss}/N$ (Equation \ref{eq:hd-bb84-our-rate-2}) for $p_{vac} = 0$ (no loss); Dashed red line is the previous best known bound for the HD-BB84 key rate using alternative methods to compute $E$'s uncertainty, $\ell_{prior-HD-BB84}/N$ (Equation \ref{eq:hd-bb84-prior}) with no loss (loss is not supported in that prior work); Finally, solid-red line (lowest) is our key-rate bound when $p_{vac} = 20\%$ (i.e., a $20\%$ loss in the channel) using Equation \ref{eq:hd-bb84-our-rate}. For our key-rate evaluation, we use $\beta = 1/3$ and $\epsilon = 10^{-36}$ giving a failure probability and a value of $\epsilon_{PA}$ both on the order of $10^{-12}$. For Equation \ref{eq:hd-bb84-prior}, we use a failure probability of $10^{-12}$. For both finite key results, we use $\leakec = 1.2H(A|B)$ which, in the case of a depolarization channel, is $\leakec = 1.2(Q\log(d-1) + h(Q))$. For the theoretical upper-bound we use the $\leakec = H(A|B)$ (without the additional $1.2$ scaling factor).}}
    \label{fig:HD-BB84-dim-exp2}
\end{figure}

\begin{figure}
    \centering
    \includegraphics[width=.46\textwidth]{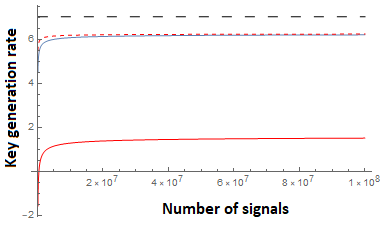}
    \includegraphics[width=.46\textwidth]{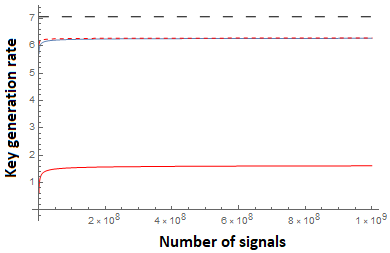}
    \caption{\footnotesize{Similar to Figure \ref{fig:HD-BB84-dim-exp2} but now showing the secret key generation rates ($\ell/N$) of the HD-BB84 protocol when $d=2^{10}$. Here the depolarization noise is $Q=10\%$. Dashed black line (top most in both graphs) is the theoretical asymptotic rate (Equation \ref{eq:hd-bb84-asym}); Solid blue line is our key-rate bound with no loss ($p_{vac} = 0$); Dashed red line is the previous best known bound for the HD-BB84 key rate (with no loss); finally, solid red line (lowest) is our key-rate bound when $p_{vac} = 50\%$.}}
    \label{fig:HD-BB84-dim-exp10}
\end{figure}

Finally, we compare to the theoretical, asymptotic upper-bound using the entropic uncertainty relation of \cite{berta2010uncertainty}. This disregards all finite-key effects (such as failure probabilities and sampling imprecision), and takes the number of signals $N \rightarrow \infty$. This bound works out easily to be:
\begin{equation}\label{eq:hd-bb84-asym}
r_{asym} = \log d - 2H(A|B) = \log d-2(Q\log(d-1) + h(Q)),
\end{equation}
where again we used the easily verified fact that, for a depolarization channel with parameter $Q$, $H(A|B) = Q\log(d-1) + h(Q)$ and, furthermore, we assume perfect error correction whereby $\leakec = H(A|B)$.

Comparisons of both our new bound and prior work are shown in Figure \ref{fig:HD-BB84-dim-exp2} (for $d=2^2$ dimensions) and Figure \ref{fig:HD-BB84-dim-exp10} (for dimension $d=2^{10}$).  We note that when $p_{vac} = 0$, our bound is only slightly lower than Equation \ref{eq:hd-bb84-prior} and this difference decreases as the number of signals increases.  Indeed, the difference turns out to be only that our confidence interval, determined by $\delta$ is slightly larger for any particular $\epsilon$ making our results asymptotically the same, though slightly lower than prior work for this case.  However, one of the powers of our new relation is its ability to also handle two overlap functions allowing us to incorporate loss in both $Z$ and $X$ bases.  Of course, as the loss increases, the key-rate decreases as expected; our new entropic uncertainty relation can, however, easily handle this scenario.  Further refinements to the classical sampling strategy used, may further improve our bound (in both the lossy and loss-less case).  Indeed our analysis of Lemma \ref{lemma:samp-psi2} is not necessarily tight. Alternative sampling strategies or improved analyses, may be easily incorporated through our methods.

%We also compare with the two-dimensional, BB84, case, we compare with results in (CITE) which proved the following bound:
%\begin{equation}
    %\ell_{BB84} = n(1-h(Q+\mu) - \leakec - \log\frac{2}{\epsilon_{PA}^3},
%\end{equation}Theorem
%where:
%\[
%\mu = \sqrt{\frac{(n+m)(m+1)}{nm^2}\ln\frac{2}{\epsilon_{PA}}}.
%\]

\section{Closing Remarks}
%%MM?

The quantum sampling framework of Bouman and Fehr, introduced in \cite{bouman2010sampling}, provides a promising new tool to develop results in general quantum information theory and quantum cryptography.  In our prior work \cite{krawec2019quantum,krawec2020new}, we used this framework to introduce so-called sampling-based entropic uncertainty relations.  In this paper, we showed how quantum sampling can be used to develop very general quantum entropic uncertainty relations allowing one to insert arbitrary classical sampling strategies, perhaps defined for a specific cryptographic task, which may then be ``promoted'' to analyze results for quantum systems.  Furthermore, we developed an entirely new three-party entropic uncertainty relation using the sampling framework as a foundation, which has applications to high-dimensional QKD as we demonstrated here. Our new relation can also handle two different measurement overlaps, allowing one to work with bases that share common vectors (such as a ``vacuum'' measurement outcome).  Since our relation handles all finite sampling precision, they provide an easy and general purpose framework for other researchers to develop finite-key cryptographic security proofs.

%In this paper, we have introduced quantum sampling \cite{bouman2010sampling} and extended upon the work done in \cite{krawec2019quantum, krawec2020new} to produce new entropic uncertainty relations through quantum sampling. The new two-party relation has several benefits over our prior work especially in its generality in that any sampling strategy may be employed.  Through this we show that, by choosing an appropriate sampling strategy, optimistic bit generation rates for a SI-QRNG protocol are possible, along with interesting theoretical results such as an alternative proof of the Maassen and Uffink entropic uncertainty relation \cite{maassen1988generalized}. We also show how quantum sampling may be used to produce new three-party entropic uncertainty relations, immediately applicable to QKD security proofs.  We show that our relation provides highly optimistic bounds for the HD-BB84 protocol \cite{HD-BB84}.

Several interesting future problems remain open.  So far we only considered projective basis measurements.  Generalizing these results to arbitrary POVM's would be greatly interesting.  However, this would require extending the quantum sampling technique to support such measurements.  Furthermore, improving the three-party relation with a tighter sampling strategy would produce even more beneficial results.  Finding other interesting theoretical and cryptographic applications of quantum sampling and our sampling-based entropic uncertainty relations would also be highly interesting.  We feel that the framework of quantum sampling is powerful and can be employed successfully in other areas of quantum information science, and further exploration of quantum sampling in the domain of quantum information theory can yield even more exciting results in quantum cryptography.

\bibliography{qrng-qkd}
\bibliographystyle{unsrt}

\end{document}